\tikzset{
modal/.style={>=stealth’,shorten >=1pt,shorten <=1pt,auto,node distance=1.5cm,
semithick},
world/.style={circle,draw,minimum size=0.5cm,fill=gray!15},
point/.style={circle,draw,inner sep=0.5mm,fill=black},
reflexive above/.style={->,loop,looseness=7,in=120,out=60},
reflexive below/.style={->,loop,looseness=7,in=240,out=300},
reflexive left/.style={->,loop,looseness=7,in=150,out=210},
reflexive right/.style={->,loop,looseness=7,in=30,out=330}
}
\newcommand{\ce}{\colonequals}
\newcommand{\pre}{\mathsf{pre}}
\renewcommand{\phi}{\varphi}
\renewcommand{\epsilon}{\varepsilon}
\newcommand{\sink}{{\!}\mathop{-}\!1}
\newcommand{\hB}{\widehat{B}}
\newcommand{\agents}{\mathcal{A}}
\newcommand{\prop}{\mathit{Prop}}
\newcommand{\atoms}{\prop}
\newcommand{\underQ}{\underline{Q}}
\newcommand{\nsr}{\mathrm{nsr}}
\newcommand{\agseq}[0]{\mathit{AgSeq}}
\newcommand{\bisim}{\mathscr{B}}
\newcommand{\rootp}{RootP}
\newcommand{\rootw}{RootW}
\newcommand{\rootwnsr}{RootW_{\nsr}}
\newcommand{\calM}{\mathcal{M}}
\newcommand{\calU}{\mathcal{U}}
\newcommand{\calF}{\mathcal{F}}
\newcommand{\wacc}{\mathit{WAcc}_{\nsr}}
\newcommand{\ta}{\mathsf{ta}}
\tikzstyle{m}=[circle, thin, draw,
\tikzstyle{n}=[circle, thin, draw,
\tikzstyle{nod}= [circle, draw,inner sep=0pt, minimum size=0.5cm ]
\tikzset{
reflexive left/.style={->,loop,looseness=10,in=160,out=220},
reflexive right/.style={->,loop,looseness=10,in=20,out=320}
}
\newtheorem{theorem}{Theorem}
\newtheorem{lemma}[theorem]{Lemma}
\newtheorem{corollary}[theorem]{Corollary}%
\theoremstyle{remark}
\newtheorem{example}[theorem]{Example}%
\theoremstyle{definition}
\newtheorem{definition}[theorem]{Definition}%
\begin{document}
\setlength{\parskip}{0pt}
\title{Consistent Update Synthesis via Privatized Beliefs
}

\author{\fnm{Thomas} \sur{Schlögl}}\email{tschloegl@ecs.tuwien.ac.at}
%
\author{\fnm{Roman} \sur{Kuznets}}\email{rkuznets@ecs.tuwien.ac.at}
%
\author{\fnm{Giorgio} \sur{Cignarale}}\email{giorgio.cignarale@tuwien.ac.at}
%
\affil{\orgdiv{Institute of Computer Engineering}, \orgname{TU Wien}, \orgaddress{\city{Vienna}, 
\country{Austria}}}

\abstract
{
    Kripke models are an effective and widely used tool for representing epistemic attitudes of agents in multi-agent systems, including distributed systems.
    Dynamic Epistemic Logic (DEL) adds communication in the form of model transforming updates.
    Private communication is key in distributed systems as processes exchanging (potentially corrupted) information about their private local state should not be detectable by any other processes. This focus on privacy clashes with the standard DEL assumption for which updates are applied to the whole Kripke model, which is usually commonly known by all agents, potentially leading to information leakage.
    In addition, a commonly known model cannot minimize the corruption of agents' local states due to fault information dissemination.
    The contribution of this paper is twofold:
    (I) To represent leak-free agent-to-agent communication, we introduce a way to synthesize an action model which stratifies a pointed Kripke model into private agent-clusters, each representing the local knowledge of the processes:
    Given a goal formula $\varphi$ representing the effect of private communication, we provide a procedure to construct an action model that (a) makes the goal formula true, (b) maintain consistency of agents' beliefs, if possible, without causing ``unrelated'' beliefs (minimal change) thus minimizing the corruption of local states in case of inconsistent information. 
    (II) We introduce a new operation between pointed Kripke models and pointed action models called \emph{pointed updates} which, unlike the product update operation of DEL, maintain only the subset of the world-event pairs that are reachable from the point, without unnecessarily blowing up the model size.\footnote{This research was funded in whole or in part by the Austrian Science Fund (FWF) project ByzDEL [\href{https://doi.org/10.55776/P33600}{10.55776/P33600}]. 
}

}

\keywords{Dynamic epistemic logic, Kripke models, Synthesis, Belief revision}

\maketitle

\section{Introduction}

\emph{Epistemic logic}~(EL)~\cite{Hin62} has been extremely successful in modeling epistemic and doxastic attitudes of agents and groups in multi-agent systems, including distributed systems~\cite{FagHalMosVar95}.
\emph{Dynamic epistemic logic}~(DEL)~\cite{plaza,ditmarsch2007dynamic} upgrades~EL by introducing model transforming modalities called \emph{updates}.
Relational structures such as \emph{action models} in Action Model Logic~(AML) and \emph{arrow update models} in the Generalized Arrow Update Logic~(GAUL)~\cite{GAUL} are used
to represent the evolution of agents' uncertainty under information change in complex communication scenarios.
GAUL and AML have proved to be equally \emph{update expressive}~\cite{AUL-synth}.
Thus, without loss of generality, we use the term ``update~models'' to refer to either action models of AML or arrow update models of GAUL.
In the update \emph{synthesis} task, the aim is to (i) find whether there exists an update model that makes a given goal formula $\varphi$ true and (ii) construct that update model from $\varphi$~\cite{AUL-synth}.

Existing synthesis methods typically work in a language extended with quantifiers over updates, such as the Arbitrary Action Modal Logic~\cite{Hales} and Arbitrary Arrow Update Modal Logic~\cite{AUL-synth} and do not address the issue of minimal change as a result of the update.
While it is possible to construct AML or GAUL update models representing completely private communication~\cite{sep-dynamic-epistemic} (see, e.g., \cref{BOOM_1}), there is no standardized update synthesis procedure for it.
In addition, most existing synthesis methods 
do not address belief consistency preservation~\cite{AUL-synth}.

Our paper is further motivated by two different yet intertwined issues:
\begin{itemize}
\item As argued by Artemov~\cite{Art20LFCS}, in multi-agent settings, common knowledge of the model (CKM) is required by agents in order to compute higher-order beliefs of other agents. While his argument focuses on uncertainties about facts, it can also be extended to agents' uncertainty about attitudes of other agents. The underlying problem is that, in multi-agent Kripke models, there is an implicit ontological distinction between two kinds of possible worlds: (a) worlds that are actually possible (AP) and (b) worlds that are only virtually possible (VP). While the former worlds constitute, for a given agent, the arena in which the actual world might lie, the latter kind of worlds are considered only to the extent of computing other agent's beliefs.
Artemov argues that without the CKM (comprising both APs and VPs), agents would not be able to compute such higher-order beliefs.
Furthermore, avoiding the CKM assumption improves tolerance against local state corruption: upon receiving corrupted information from a faulty agent, a (correct) agent might only deem the local state of the sender as corrupted, instead of being forced to consider a larger part of the accessible Kripke model inconsistent.

\item Update models do not naturally represent private agent-to-agent communication, as the product update operation typical of DEL is applied in principle to the full product of all the world-event pairs whose preconditions are matched~\cite{ditmarsch2007dynamic}. In this sense, these updates are applied globally to the whole model, potentially leading to information leakage. The problem is also identified by Herzig, as ``it is not easy to come up with a meaningful notion of a speaker''(\cite{Herzig}), highlighting the fact that local communication might have (undesired) effects on a global model.
\end{itemize}

Our solution to both issues is to change the structure of the Kripke models under consideration.
While Artemov proposes to avoid the CKM by changing the definition of truth for knowledge~\cite{Art20LFCS}, we instead suggest a stratified Kripke model structure\footnote{More precisely, the kind of structures that will be introduced for privatization are directed acyclic graphs.}, without changing any basic definitions: by stratifying a Kripke model into a pointed \emph{privatized Kripke model} we detach each agent's view of the model from any other agent's view of it. 
In our stratified models, not only we clearly distinguish between APs and VPs for any agent, but we also avoid the CKM assumption by letting each agent access only a limited and private part of it\footnote{A formal definition of privatization is provided in \cref{sec:privatization}.}, without losing the ability to reason about higher-order beliefs of other agents of arbitrary depths.
At the same time, while we do not solve the broader issue of agency identified by Herzig, we propose an alternative update operation called \emph{pointed updates} which does not apply to the full product of the Kripke model and the action model. In particular, we exploit the novel stratification method (and the distinction between APs and VPs) to discern when to form a world-event tuple and when not. In this sense, our updates are not applied to the whole model, as they are progressively applied only to those worlds that match a certain structure (other than a certain precondition).
The proposed stratified structure complies with the crucial notion of \textit{local view} of an agent, typical of distributed systems, which is only a limited portion of the \textit{global view} of the system, usually inaccessible to agents. The stratification not only represents the idea of each agent having a partial view of a Kripke model, but also distinguishes the actual world from any other worlds, by making it inaccessible. 
In other words, we endorse a \textit{fallibilistic} assumption, that is, no agent can be sure that the global state of the system is captured in their limited view of it.
Naturally, agents might have an accurate view of the system during execution, represented in our model by worlds propositionally equivalent to the actual world that are however accessible only in agents' (respective) private clusters.

The aim of this paper is to provide a novel action model synthesis mechanism for AML designed for enforcing private beliefs specified by a quantifier-free goal formula while preserving the consistency of agents' beliefs whenever possible.
Our solution to the leak-free consistent update synthesis task accounts for a limited range of goal formulas, restricted to deterministic belief increase only, i.e. to conjunctions of (positive) modal operators.
This limitation is due to the fact that preserving minimality and consistency becomes highly problematic for unrestricted goal formulas: one the one hand, negations of belief operators might introduce ignorance, potentially contradicting previously held beliefs. On the other hand, disjunctions of modal operators have multiple realizations, making the update non-deterministic.
Addressing the remaining cases (and their interactions) is left for future work.

Finally, the proposed update mechanism is generally more efficient w.r.t.~AML and GAUL updates as it deletes all states not reachable from the actual world, making the growth in size of the model at worst linear in the size of the goal formula after one update and decreasing starting from the second update based on the same modal syntactic tree.
By contrast, it is well known that repeated applications of AML and GAUL updates lead to the exponential growth of the model. It is also known that the model checking problem is PSPACE-complete \cite{DBLP:conf/tark/AucherS13}.

\emph{Related work}. A prominent approach for dealing dynamically with beliefs in Kripke terms is the Dynamic Belief Revision~\cite{DBR,vDRevoc} account, where agents' conditional beliefs are expressed via a plausibility relation ranging over virtual states.
This account, however, cannot represent truly \emph{private} updates, as its models encode both plausibility and epistemic relations that would be revealed by the CKM assumption.
In recent years, different answers to the limitations of the CKM assumption in the standard semantics for EL and DEL have been proposed, for example by changing accessibility relations~\cite{Art20LFCS} or by using belief bases as primitives~\cite{Lorini,Lorini2020-LORREL}.
Our approach, on the other hand, performs private, leakage-free consistent synthesis by structuring standard Kripke models into exclusively accessible parts that are not known to other agents, let alone commonly known. \looseness=-1

\emph{Paper organization}. 
We start by giving a motivating example in \cref{sec:formal_intro}, as well as introducing the basic definitions of DEL and the crucial new definition of pointed update.
In \cref{subsec:Priv} we propose a solution to the leak-free consistent update synthesis for a limited range of goal formulas and we show its fruitfulness by applying it to our motivating example.
In \cref{sec:privatization} we illustrate the properties of the synthesized action models, in particular w.r.t. privatization properties, crucial for avoiding information leakage while preserving consistency.
 Finally, conclusions are provided in \cref{subsec:concl}.

\section{Formal Preliminaries and Motivation} \label{sec:formal_intro}
We use the standard doxastic multimodal language 
\[
\varphi \coloncolonequals p \mid\neg \varphi \mid (\varphi \wedge \varphi) \mid B_i \varphi
\]
where $i \in \agents=\{1,\dots,n\}$ (for $ n>1$) and $p \in \prop$. Formula $B_i \varphi$ means \emph{agent $i$ believes $\varphi$ (to be true)}; note that we generally do not assume the factivity of beliefs.
The other Boolean connectives are defined as usual and $\hB_i \ce \lnot B_i \lnot$ representing the dual modality \emph{considers possible}.

\begin{example}[Balder--Loki--Thor example]
\label{Ex:BLT}
Teenage brothers Balder, Loki, and Thor prepare for an exam that, as is commonly known among them,  contains one question asking to decide whether $p$ or $\neg p$ is the case. Suppose the correct answer is $\neg p$.
The initial model $\calM$ is the left model of~\cref{fig:LT_a}. While the three are studying, Loki, unbeknownst to Thor, tries to trick Balder: Loki lies to Balder that he has overheard Thor boasting to know the correct answer to be~$p$.
Balder seems to dismiss Loki's claim as a ruse, leaving Loki thinking his trick had no effect.
In truth, however, Balder does believe Loki and is now under the impression that Loki agrees with Thor that the answer is~$p$.
Balder himself, on the other hand, is not going to take Thor's word on it, given Thor's propensity to boast and act rashly.
Thus, using $b$, $l$, and $t$ for the brothers, Loki's trick should change Balder's beliefs to achieve the \emph{goal formula} 
\begin{equation}
\label{blt_goal}
\varphi= B_b (B_t p \land B_l B_t p \land B_l p)
\end{equation}
without either affecting Loki's or Thor's beliefs or  making $B_b p$ true.

Since ignorance of the correct answer is common belief in the initial model~$\calM$,  the public announcement of $\varphi$ would cause everybody's beliefs to become inconsistent, whereas a private message $B_t p \land B_l B_t p \land B_l p$ would cause the same inconsistency, but for Balder's beliefs only.
This happens whether one uses the world-removing or arrow-removing updates.

Instead, we propose an update method that stratifies the initial model~$\calM$ into several clusters  representing individual beliefs and updating only some of these clusters, depending on the modal structure of the goal formula, resulting in model~$\calM^U$ in~\cref{fig:LT_a}, where updated clusters are represented by light-gray rectangles.
Cluster~0 contains only the actual world~$v$, which  no agent considers possible.
Cluster~$-1$ (the \emph{sink}) is the copy of the initial model representing (higher-order) beliefs of agents unaffected by the message, including Loki's and Thor's beliefs.
In particular, the ignorance of the correct answer is still common belief between Loki and Thor.
Balder is also ignorant of the correct answer: his beliefs are represented by cluster~$1$, so $\calM^U, v \nvDash B_b p \lor B_b \neg p$. But his ignorance is not anymore common with Loki and Thor. Indeed,
according to cluster $2$, which represents Balder's beliefs about Thor's  beliefs,  $\calM^U, v \vDash B_b B_t p$. 
Cluster~$3$ plays the same role for Balder's beliefs about Loki's beliefs, making $\calM^U, v \vDash B_b B_l p$. Thus, while himself being ignorant of the answer, Balder thinks that Loki and Thor have chosen for themselves.
Finally, cluster~$4$ is Balder's beliefs about Loki's beliefs about Thor's beliefs so that $\calM^U, v \vDash B_bB_lB_t p$.
Overall, $\calM^U, v \vDash \varphi$.
The $b$-labeled double arrow from cluster $3$ to the sink~$-1$ represents two $b$-labeled arrows from the only world of cluster~$3$ to each of the two worlds of the sink and ensures that Balder believes that Loki believes that Balder has not changed his beliefs,  $\calM^U, v \vDash B_b B_l (\neg B_b p \land \neg B_b \neg p)$. Similarly, the $b,l$-labeled double arrows from clusters~$2$~and~$4$ to the sink signify that in no scenario where Thor has chosen an answer for himself, does he expect others to be follow his example (or even be aware of his choice).
It is easy to see that no agent has inconsistent beliefs, $\calM^U \vDash \neg B_b \bot \land \neg B_l \bot \land \neg B_t \bot$, and that Loki's and Thor's beliefs are the same as in the initial model, $M, v \vDash B_a \psi$ if{f} $\calM^U, v \vDash B_a \psi$ for $a \in \{l,t\}$.

\end{example}
\begin{figure}[t]
    \centering
    \resizebox{13cm}{!}{%
    
\begin{subfigure}{0.8\textwidth}
\begin{tikzpicture}
\begin{scope}[xshift=1cm,>=stealth]
    \node[m] (1) {$p$};
    \node[m,fill=gray!40,label=45:{$v$}] (2) [right = 2cm  of 1]  {$\neg p$};
    \node (g) [right = 1cm  of 1] {};
    \node [above=1cm of g] {\textbf{\huge{$\calM$}}};
    \node (n) [below=5cm of g] {};
    
    \path[]
        (1) edge [<->] node[above] {$b$,$l$,$t$} (2)
            edge [loop left, ->] node {$b$,$l$,$t$} (1)
        (2) edge [loop right, ->] node {$b$,$l$,$t$} (2)
            ;
    \end{scope}
\end{tikzpicture}
\label{4}
\end{subfigure}

\hfill

\begin{subfigure}{0.9\textwidth}
\begin{tikzpicture}
    \begin{scope}[xshift=-6cm,>=stealth, yshift=3cm]
    \node[m,fill=gray!50,label=135:{0},label=45:{$v$}] (0) {$\neg p$};
    \node (g) [below = 1cm  of 0] {};
    \node [left=1.5cm of 0] {\textbf{\huge{$\calM^U$}}};
    \node (n) [below = 1cm  of 0] {};
    \node[m,label=135:{4}] (2a) [right = 1.5cm  of n] {$p$};
    \node[m] (2b) [right = .5cm  of 2a] {$\neg p$};
    \node[m,,label=135:{3}] (3a) [below = 1.5cm  of 2a] {$p$};
    \node[m,label=135:{2}] (3b) [right = .7cm  of 3a] {$p$};
    \node[m,label=135:{1}] (4a) [below = 1.5cm  of 3b] {$p$};
    \node[m,label=135:{$-1$}] (6a) [left = 3.5cm  of 4a]  {$p$};
    \node[m] (6b) [right = .5cm  of 6a]  {$\neg p$};
    \node (g1) [right = 0.01cm  of 0] {};
    \node (g2) [above = 0.01cm  of 2a] {};
    \node (g3) [below = 0.01cm  of 2a] {};
    \node (g4) [above = 0.01cm  of 3a] {};
    \node (g5) [below = 0.01cm  of 2b] {};
    \node (g6) [above = 0.01cm  of 3b] {};
    \node (g7) [below = 0.01cm  of 3b] {};
    \node (g8) [above = 0.01cm  of 4a] {};
    \node (g9a) [below = 0.01cm  of 3a] {}; 
    \node (g9b) [left = 0.5cm  of g9a] {};
    \node (g10a) [above = 0.01cm  of 6a] {}; 
    \node (g10b) [right = .7cm  of g10a] {};
    \node (g11a) [below = 0.01cm  of 3b] {}; 
    \node (g11b) [left = 0.5cm  of g11a] {};
    \node (g12) [right = 2cm  of g10a] {};
    \node (g13) [left = 0.01cm  of 4a] {}; 
    \node (g14a) [right = 0.3cm  of g12] {}; 
    \node (g14b) [below = 0.6cm  of g14a] {};
    \node (g15a) [below = 0.01cm  of 0] {};
    \node (g15b) [left = 0.3cm  of g15a] {};
    \node (g16a) [right = 1cm  of g10a] {};
    \node (f1) [right = 0.01cm  of 2b] {};
    \node (f2) [right = 0.01cm  of 3b] {};
    \node (f3) [right = 0.01cm  of 4a] {};
    \node (f4) [left = 0.01cm  of 3a] {};
    \node (f5) [left = 0.01cm  of 6a] {};
    
    \draw[-{Implies},double,line width=2pt] (g1) -- (g2) node[midway,above] {$b$};
    \draw[-{Implies},double,line width=2pt] (g3) -- (g4) node[midway,left] {$t$};
    \draw[-{Implies},double,line width=2pt] (g5) -- (g6) node[midway,right] {$l$};
    \draw[-{Implies},double,line width=2pt] (g7) -- (g8) node[midway,right] {$t$};
    \draw[-{Implies},double,line width=2pt] (g9b) -- (g10b) node[midway,right] {$b$,$l$};
    \draw[-{Implies},double,line width=2pt] (g11b) -- (g12) node[midway,above] {$b$};
    \draw[-{Implies},double,line width=2pt] (g13) -- (g14b) node[midway,above] {$b$,$l$};
   \draw[-{Implies},double,line width=2pt] (g15b) -- (g10a) node[midway,right] {$l$, $t$};
        
    \path[->] (f1) edge[-{Implies},double,line width=2pt,reflexive right] node[right] {$b$} (f1);
    \path[->] (f2) edge[-{Implies},double,line width=2pt,reflexive right] node[right] {$l$} (f2);
    \path[->] (f3) edge[-{Implies},double,line width=2pt,reflexive right] node[right] {$t$} (f3);
    \path[->] (f4) edge[-{Implies},double,line width=2pt,reflexive left] node[left] {$t$} (f4);
    \path[->] (f5) edge[-{Implies},double,line width=2pt,reflexive left] node[left] {$b$, $l$, $t$} (f5);

    \end{scope}
\begin{pgfonlayer}{background}
\filldraw [line width=4mm,line join=round,black!10]
      (0.north  -| 0.east)  rectangle (0.south  -| 0.west);
 \filldraw [line width=4mm,line join=round,black!10]
      (2a.north  -| 2b.east)  rectangle (2a.south  -| 2a.west);
    \filldraw [line width=4mm,line join=round,black!10]
      (3a.north  -| 3a.east)  rectangle (3a.south  -| 3a.west);
       \filldraw [line width=4mm,line join=round,black!10]
      (3b.north  -| 3b.east)  rectangle (3b.south  -| 3b.west);
       \filldraw [line width=4mm,line join=round,black!10]
      (4a.north  -| 4a.east)  rectangle (4a.south  -| 4a.west);
      \filldraw [line width=4mm,line join=round,black!10]
      (6a.north  -| 6b.east)  rectangle (6a.south  -| 6a.west);
  \end{pgfonlayer}
\end{tikzpicture}
\label{5}
\end{subfigure}
    }
    \caption{Left: Initial pointed Kripke model $\calM$. Right: Updated model $\calM^U$ (the actual world~$v$ is dark gray). A formula $\phi$ within a circle representing a world means that $\phi$~is true in this world. A light-gray rectangle with $n$ in the top left corner is the cluster that is numbered~$n$.
    A double arrow from such cluster $n$ to cluster $m$  represents a set of accessibility arrows with the same agent's label from every world of cluster~$n$ to every world of cluster~$m$.\looseness=-1}
    \label{fig:LT_a}
\end{figure}
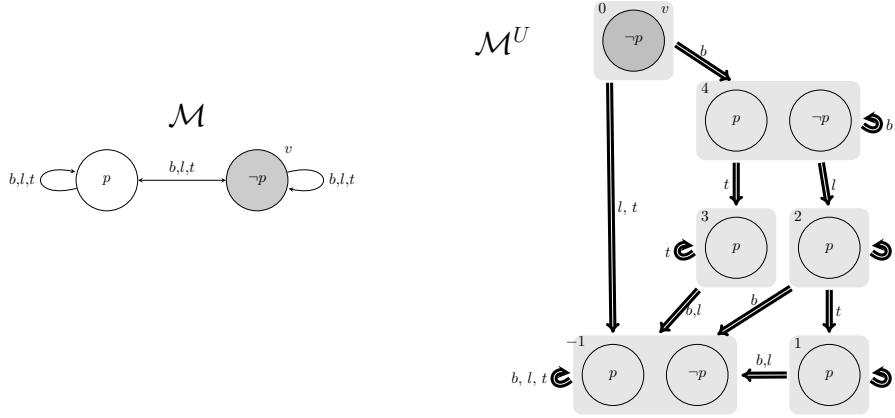

\begin{definition}
\label{def:target}
The set of \emph{target agents} of a modal formula $\phi$ is defined  as follows: $\ta(p) \ce \varnothing$; $\ta(\lnot \phi) \ce \ta(\phi)$; $\ta (\phi \land \psi) \ce \ta(\phi) \cup \ta(\psi)$; finally $\ta (B_i \phi) \ce \{i\}$. 
 \end{definition}
 E.g., $\ta\bigl( B_b (B_t p \land B_l B_t p \land B_l p)\bigr) = \{b\}$ even though $\ta(B_t p \land B_l B_t p \land B_l p) = \{t,l\}$.

We introduce the standard definitions of DEL \cite{ditmarsch2007dynamic}:

\begin{definition}[Kripke frame]
A Kripke frame for the set of agents $\agents$ is a pair $\langle W, R\rangle$ where $W \ne \varnothing$ is a non-empty set, called \emph{domain} and $R = (R_1, \ldots,  R_n)$ consists of  binary \emph{accessibility relations} $R_i \subseteq W \times W$ on~$W$.
\end{definition}

\begin{definition}[Pointed Kripke model] 
\label{def:kripke}
	For a set of agents $\agents$, a \emph{Kripke model} is a triple $\calM = \langle S, R, V\rangle$ where $\langle S, R\rangle$ is a Kripke frame with domain~$S$ consisting of  
 \emph{possible worlds} or \emph{states},
and the \emph{valuation function} $V \colon \atoms \to 2^S$ determines the set $V(p)\subseteq S$ of possible worlds where an atom~$p \in \prop$ is true. A \emph{pointed Kripke model} is a pair $(\calM,w)$ where 
 $w \in S$ represents the \emph{actual world/state}.

\emph{Truth at world $w$ of model $\calM$} is determined by $\calM, w \vDash p$ if{f} $w \in V(p)$. Boolean connectives are defined as usual. $\calM, w \vDash B_i \varphi$  if{f}  $\calM, w' \vDash \varphi$ for all  $w'\in S$ such that $w R_i w'$.
A formula $\varphi$ is \emph{false at world $w$}, $\calM,w \nvDash \varphi$, if{f} it is not true at $w$.\looseness=-1
\end{definition}

\begin{definition}[Pointed action model]
\label{def:AM}
    For a set of agents $\agents$, an \emph{action model} is a triple $\calU = \langle E, Q, \pre \rangle$ where $\langle E, Q\rangle$ is a Kripke frame with domain $E$ consisting of \emph{action points} or \emph{events}, and    the \emph{precondition function} $\pre: E \rightarrow \mathcal{L}$ assigns the precondition  $\pre(\beta)\in \mathcal{L}$ that is necessary for an event $\beta \in E$ to happen. A \emph{pointed action model} is a pair $(\calU,\alpha)$ where $\alpha \in E$ represents the \emph{actual event/action point}.
\end{definition}

\begin{definition}[Product update]
\label{def:prod_up}
    The \emph{(restricted modal) product update} of  a Kripke model $\calM = \langle S,R,V \rangle$ with an action model $\calU = \langle E, Q, \pre \rangle$ is   a Kripke model \mbox{$\calM \otimes \calU \ce \langle S',R',V' \rangle$} where
\begin{itemize}
    \item $S' \ce \{(v, \beta) \in S \times E \mid \calM, v \vDash \pre(\beta) \}$,
    \item 
    $R'_i \ce \left\{\bigl((v, \beta),(u, \gamma)\bigr)\in S'\times S' \mid (v, u) \in R_i \text{ and } (\beta, \gamma) \in Q_i\right\}$,
    \item 
    $V'(p) \ce \{(v, \beta) \in S' \mid v \in V(p)\}$.
\end{itemize}
If $S' = \varnothing$, the product update is undefined.
\end{definition}

A product update provides the semantics for a communication scenario represented by a pointed action model $(\calU,\alpha)$: formula $\phi$ is true after the communication $\calU$, i.e., $\calM, w \vDash [\calU,\alpha]\phi$, if{f} $\calM \otimes \calU, (w,\alpha) \vDash \phi$ or $(w,\alpha) \notin S'$.

It is well known that repeatedly applying product updates for multi-round communication can cause exponential blow up of the domain of the model, even for simple models and action models, see  Figs.~\ref{BOOM_1}--\ref{BOOM_3}. Indeed, every updated model has exactly one world where $p$ is false. Hence, a further update of a model of $N$~worlds with~$\calU$ creates a model with $2N-1$~worlds.

The standard (often implicit) solution is to reduce the size of the model by deleting all worlds that are unreachable from the resulting actual world.
This operation is technically sound, as deleting all these unreachable worlds yields a model that is  bisimilar  and, hence, modally equivalent~\cite{blackburn_rijke_venema_2001}.

We introduce a new definition of \textit{pointed update} that incorporates this world-removing operation formally and explicitly:

\begin{definition}[Pointed update]
\label{def:pointed_update}
Let $(\calM,w)=(\langle S, R, V\rangle, w)$ be a pointed Kripke model and $(\calU,\alpha) = (\langle E, Q, \pre\rangle, \alpha)$ be a pointed action model. If \mbox{$\calM, w \vDash \pre(\alpha)$}, then we define the updated pointed Kripke model $\bigl(\calM\odot\calU, (w,\alpha)\bigr)$ where \mbox{$\calM\odot\calU \ce \langle S^\calU, R^\calU, V^\calU\rangle$} as follows.
Let 
$
T^\calU \colonequals \bigl\{(v,\beta) \in S \times E \mid  \calM, v \vDash \pre(\beta)\bigr\}$.
Note that $(w,\alpha) \in T^\calU$.
\begin{itemize}
\item The domain $S^\calU$ is the smallest subset of $T^\calU$ containing 
$(w,\alpha)$ such that:\\
 if $(v,\beta) \in S^\calU$, $(u,\gamma) \in T^\calU$, and both $v R_i u$ and $\beta Q_i \gamma$ for some agent $i$, then $(u,\gamma) \in S^\calU$;
\item $R^\calU_i \ce \left\{\bigl((v, \beta),(u, \gamma)\bigr)\in S^\calU\times S^\calU \mid (v, u) \in R_i \text{ and } (\beta, \gamma) \in Q_i\right\}$;
\item $V^\calU(p) \ce \left\{(v, \beta)\in S^\calU \mid v \in V(p)\right\}$.
\end{itemize}
\end{definition}

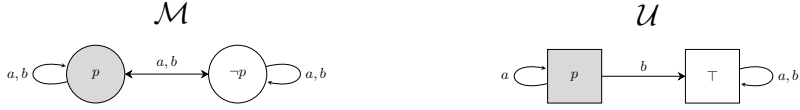
\begin{figure}[t]
    \centering
    \resizebox{13cm}{!}{%
    
\begin{subfigure}{0.9\textwidth}
\begin{tikzpicture}
\begin{scope}[xshift=1cm,>=stealth]
    \node[m, fill=gray!30] (1) {$p$};
    \node[m] (2) [right = 2cm  of 1]  {$\neg p$};
    \node (g) [right = 1cm  of 1] {};
    \node [above=1cm of g] {\textbf{\huge{$\calM$}}};
    
    \path[]
        (1) edge [<->,{Stealth[scale width=1.5]}-{Stealth[scale width=1.5]}] node[above] {$a,b$} (2);
    \path[->, -{Stealth[scale width=1.5]}]    
        (1) edge [loop left] node {$a,b$} (1)
        (2) edge [loop right] node {$a,b$} (2)
            ;
    \end{scope}
\end{tikzpicture}
\label{4}
\end{subfigure}

\hfill

\begin{subfigure}{0.9\textwidth}
\begin{tikzpicture}
    \begin{scope}[xshift=-6cm,>=stealth, yshift=3cm]
    \node[fill=gray!30,minimum size=1.3cm,draw] (1) {$p$} ;
    \node[minimum size=1.3cm,draw] (2) [right = 2cm  of 1]  {$\top$} ;
    \node (g) [right = 1cm  of 1] {};
    \node [above=1cm of g] {\textbf{\huge{$\calU$}}};
 \path[]
        (1) edge [->, -{Stealth[scale width=1.5]}] node[above] {$b$} (2)
            edge [loop left, ->] node {$a$} (1)
        (2) edge [loop right, ->] node {$a,b$} (2)
            ;
    \end{scope}

\end{tikzpicture}
\label{5}
\end{subfigure}
    }
    \caption{Left: Initial (pointed) Kripke model $\calM$ where agent $a$ and $b$ are uncertain about whether $p$~or~$\lnot p$ is true. Right: (Pointed) action model $\calU$ for a private message~$p$ to $a$: agent~$a$  learns $p$, while $b$ believes that nothing happened. A formula $\phi$ within a square representing an event $\beta$ is the precondition for this event, i.e., $\pre(\beta) = \phi$.\looseness=-1}
    \label{BOOM_1}
\end{figure}

\begin{figure}[t]
    \centering
    \resizebox{13cm}{!}{%
    
\begin{subfigure}{0.9\textwidth}
\begin{tikzpicture}
\begin{scope}[xshift=1cm,>=stealth]
    \node[m] (1) {$p,\top$};
    \node[m] (2) [right = 2cm  of 1]  {$\neg p,\top$};
    \node (g) [right = 1cm  of 1] {};
    \node [above=1cm of g] {\textbf{\huge{$\calM \otimes \calU$}}};
    \node[m,fill=gray!30] (3) [below=1.5cm of g] {$p,p$};
    
    \path[]
        (1) edge [<->, {Stealth[scale width=1.5]}-{Stealth[scale width=1.5]}] node[above] {$a,b$} (2)
            edge [loop left, ->] node {$a,b$} (1)
        (2) edge [loop right, ->] node {$a,b$} (2)
        (3) edge [->, -{Stealth[scale width=1.5]}] node[above] {$b$} (1)
            edge [->, -{Stealth[scale width=1.5]}] node[above] {$b$} (2)
            edge [loop right, ->] node {$a$} (3)
            ;
    \end{scope}
\end{tikzpicture}
\label{4}
\end{subfigure}

\hfill

\begin{subfigure}{0.9\textwidth}
\begin{tikzpicture}
    \begin{scope}[xshift=-6cm,>=stealth, yshift=3cm]
    \node[m] (1) {$p,\top, \top$};
    \node[m] (2) [right = 2cm  of 1]  {$\neg p,\top, \top$};
    \node (g) [right = 1cm  of 1] {};
    \node [above=1cm of g] {\textbf{\huge{$(\calM \otimes \calU)\otimes \calU$}}};
    \node[m,fill=gray!30] (3) [below=1.5cm of g] {$p,p,p$};
    \node[m] (4) [right = 1.5cm  of 3]  {$p,p,\top$};
    \node[m] (5) [left = 1.5cm  of 3]  {$p,\top,p$};
    
    \path[]
        (1) edge [<->, {Stealth[scale width=1.5]}-{Stealth[scale width=1.5]}] node[above] {$a,b$} (2)
        (1) edge [loop left, ->] node {$a,b$} (1)
        (2) edge [loop right, ->] node {$a,b$} (2)
        (3) edge [->, -{Stealth[scale width=1.5]}] node[above] {$b$} (1) edge [->, -{Stealth[scale width=1.5]}] node[above] {$b$} (2)
            edge [loop right, ->] node {$a$} (3)
        (4) edge [loop right, ->] node {$a$} (4)
        (4)    edge [->, -{Stealth[scale width=1.5]}] node[above] {$b$} (1)
        (4)    edge [->, -{Stealth[scale width=1.5]}] node[above] {$b$} (2)
        (5) edge [loop left, ->] node {$a$} (5)
        (5)    edge [->, -{Stealth[scale width=1.5]}] node[above] {$b$} (1)
        (5)    edge [->, -{Stealth[scale width=1.5]}] node[above] {$b$} (2)
            ;
    \end{scope}

\end{tikzpicture}
\label{5}
\end{subfigure}}
    \caption{Left: First product update $\calM \otimes \calU$ of Kripke model $\calM$ with action model $\calU$. Right: Second product update $(\calM \otimes \calU) \otimes \calU$ with the same action model~$\calU$. $p$ is true in all worlds that start from~$p$ and false in all worlds that start from $\lnot p$.\looseness=-1}
    \label{BOOM_2}
\end{figure}
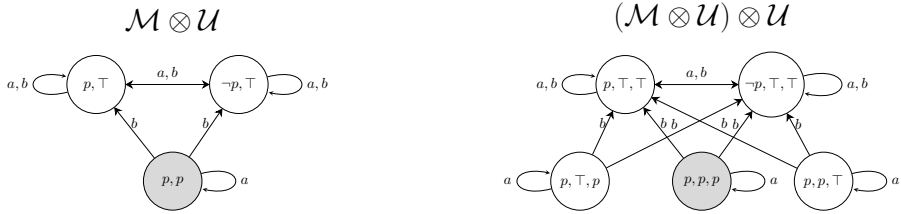

\begin{figure}
    \centering
    \resizebox{11cm}{!}{%
    \begin{tikzpicture}    
    \node[n] (1) {$\small{p,\top, \top, \top}$};
    \node[n] (2) [right = 2cm  of 1]  {$\tiny{\neg p,\top, \top, \top}$};
    \node (g) [right = 1cm  of 1] {};
    \node [above=1cm of g] {\textbf{\Large{$\bigl((\calM \otimes \calU)\otimes \calU\bigr)\otimes \calU$}}};
    \node[n,fill=gray!30] (3) [below=3cm of g] {$p,p,p,p$};
    \node[n] (4) [right = 1cm  of 3]  {$p,\top,p,\top$};
    \node[n] (5) [left = 1cm  of 3]  {$p,\top,p,p$};
    \node[n] (6) [right = 1cm  of 4]  {$p,p,p, \top$};
    \node[n] (7) [left = 1cm  of 5]  {$p,\top,\top,p$};
    \node[n] (8) [left = 1cm  of 7]  {$p,p,\top,p$};
    \node[n] (9) [right = 1cm  of 6]  {$p, p,\top,\top$};
    
    \path[<->, {Stealth[scale width=1.5]}-{Stealth[scale width=1.5]}]
        (1) edge [] node[above] {$a,b$} (2);
    \path[]    
        (1) edge [loop left, ->] node {$a,b$} (1)
        (2) edge [loop right, ->] node {$a,b$} (2)
        (3) edge [->] node[above] {$b$} (1)
            edge [->] node[above] {$b$} (2)
            edge [loop below, ->] node {$a$} (3)
        (4) edge [loop below, ->] node {$a$} (4)
            edge [->] node[above] {$b$} (1)
            edge [->] node[above] {$b$} (2)
        (5) edge [loop below, ->] node {$a$} (5)
            edge [->] node[above] {$b$} (1)
            edge [->] node[above] {$b$} (2)
        (6) edge [loop below, ->] node {$a$} (6)
            edge [->] node[above] {$b$} (1)
            edge [->] node[above] {$b$} (2)
        (7) edge [loop below, ->] node {$a$} (7)
            edge [->] node[above] {$b$} (1)
            edge [->] node[above] {$b$} (2)
        (8) edge [loop below, ->] node {$a$} (8)
            edge [->] node[above] {$b$} (1)
            edge [->] node[above] {$b$} (2)
        (9) edge [loop below, ->] node {$a$} (9)
            edge [->] node[above] {$b$} (1)
            edge [->] node[above] {$b$} (2)
            ;

\end{tikzpicture}
}
    \caption{Third product update $\bigl((\calM \otimes \calU)\otimes \calU\bigr)\otimes \calU$ with the same action model $\calU$.}
    \label{BOOM_3}
\end{figure}

To show that pointed updates produce  the same result as product updates but with smaller models, we use the notion of bisimulation.

\begin{definition}[Bisimulation]
A \emph{bisimulation between Kripke frames} $\langle W, R\rangle$ and $\langle W', R'\rangle$ is a nonempty binary relation $\bisim \subseteq W \times W'$, such that for every $v\bisim v'$ and  $i \in \agents$:
\begin{compactitem}
    \item \textbf{Forth}: if $v R_i u$, then there is $u'\in S'$ such that $v' R_i'u'$ and  $u\bisim u'$;
    \item \textbf{Back}: if $v' R_i' u'$, then there is $u \in S$ such that  $v R_i u$ and $u\bisim u'$.
\end{compactitem}

A \emph{bisimulation between Kripke models} $\calM=\langle S, R,V\rangle$ and $\calM'=\langle S', R',V'\rangle$ is a bisimulation $\bisim\subseteq S \times S'$ between their frames $\langle S, R\rangle$ and $\langle S', R'\rangle$ that additionally satisfies for every $v\bisim v'$ and  $p \in \prop$:
\begin{compactitem}
    \item \textbf{Atoms:} $v \in V(p)$ iff $v' \in V'(p)$.
\end{compactitem}

A \emph{bisimulation between action models} $\calU=\langle E, Q, \pre\rangle$ and $\calU'=\langle E', Q', \pre'\rangle$ is a bisimulation $\bisim\subseteq E \times E'$ between their frames $ \langle E, Q\rangle$ and $\langle E',Q'\rangle$ that additionally satisfies for every $\alpha\bisim \alpha'$:
\begin{compactitem}
    \item \textbf{Pre:} $\pre(\alpha) = \pre'(\alpha')$.
\end{compactitem}

Pointed Kripke models $(\calM, v)$ and $(\calM', v')$ are \emph{bisimilar}, notation $\calM, v \leftrightarroweq \calM', v'$, if{f} there is a bisimulation $\bisim$ between $\calM$ and $\calM'$ such that $v \bisim v'$. 

Pointed Kripke models $(\calM, v)$ and $(\calM', v')$ are \emph{$G$-bisimilar} for a group $G \subseteq \agents$ of agents, notation $\calM, v \leftrightarroweq_G \calM', v'$, if{f}  for any $a \in G$
\begin{compactitem}
    \item \textbf{$G$-Forth}: if $v R_a u$, then there is $u'\in S'$ such that $v' R_a'u'$ and  $\calM,u\leftrightarroweq \calM,u'$;
    \item \textbf{$G$-Back}: if $v' R_a' u'$, then there is $u \in S$ such that  $v R_a u$ and $\calM,u\leftrightarroweq \calM,u'$.
\end{compactitem}
The definitions of bisimilarity and $G$-bisimilarity for pointed action models are analogous. 
\end{definition}

\begin{definition}[Equivalences]
Pointed Kripke models $(\calM, v)$ and $(\calM', v')$  are called \emph{modally equivalent}, notation  $\calM, v \equiv \calM', v'$, whenever $\calM, v \vDash \phi$ if{f} $\calM', v' \vDash \phi$ for all formulas~$\phi$. They are called \emph{$G$-indistinguishable for group $G \subseteq \agents$}, notation  \mbox{$\calM, v \equiv_G \calM', v'$}, whenever $\calM, v \vDash B_a\phi$ if{f} $\calM', v' \vDash B_a\phi$ for all formulas~$\phi$ and agents $a\in G$.
\end{definition}

\begin{theorem}[Bisimilarity implies modal equivalence]
\label{bisme}
\leavevmode
\begin{compactenum}
\item\label{bismeall}
If $\calM, v\leftrightarroweq\calM', v'$, then $\calM, v\equiv\calM', v'$.
\item\label{bismeone}
If $\calM, v\leftrightarroweq_G\calM', v'$, then $\calM, v\equiv_G\calM', v'$.
\end{compactenum}
\end{theorem}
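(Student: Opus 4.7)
The plan is to prove part~\eqref{bismeall} first by a standard structural induction on~$\phi$, and then derive part~\eqref{bismeone} as a one-step consequence.

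For part~\eqref{bismeall}, fix a bisimulation $\bisim$ between $\calM=\langle S,R,V\rangle$ and $\calM'=\langle S',R',V'\rangle$ with $v\bisim v'$, and prove by induction on~$\phi$ that $\calM,u\vDash\phi$ iff $\calM',u'\vDash\phi$ for every pair $u\bisim u'$. The atomic case $\phi=p$ is immediate from the \textbf{Atoms} clause. The Boolean cases $\phi=\lnot\psi$ and $\phi=\psi_1\land\psi_2$ reduce directly to the induction hypothesis. The only nontrivial case is $\phi=B_i\psi$: assuming $\calM,u\vDash B_i\psi$, take any $u'R'_i t'$; by \textbf{Back} there is some $t\in S$ with $uR_i t$ and $t\bisim t'$; then $\calM,t\vDash\psi$ by assumption, so $\calM',t'\vDash\psi$ by induction hypothesis, showing $\calM',u'\vDash B_i\psi$. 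The converse direction uses \textbf{Forth} symmetrically. Instantiating at $u=v$, $u'=v'$ gives the desired statement.

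For part~\eqref{bismeone}, assume $\calM,v\leftrightarroweq_G \calM',v'$ and fix $a\in G$ and a formula $\psi$; it suffices to show $\calM,v\vDash B_a\psi$ iff $\calM',v'\vDash B_a\psi$. Suppose $\calM,v\vDash B_a\psi$ and let $v'R'_a u'$. By \textbf{$G$-Back} there is $u\in S$ with $vR_a u$ and $\calM,u\leftrightarroweq\calM',u'$; then $\calM,u\vDash\psi$ by assumption, and applying part~\eqref{bismeall} to the full bisimulation on the successors yields $\calM',u'\vDash\psi$. Hence $\calM',v'\vDash B_a\psi$. The converse uses \textbf{$G$-Forth} symmetrically. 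Since $a\in G$ and $\psi$ were arbitrary, $\calM,v\equiv_G\calM',v'$.

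The proof is essentially routine: the only thing to be careful about is that in part~\eqref{bismeone} one must not attempt an induction directly on~$\phi$, because nothing is assumed about the bisimilarity of successors reached via agents outside~$G$. The crucial design choice in the definition of $G$-bisimilarity is that its forth/back conditions for agents in~$G$ already guarantee \emph{full} bisimilarity of the successor pairs, which is exactly the hypothesis needed to invoke part~\eqref{bismeall} and complete the argument.
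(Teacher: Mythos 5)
Your proof is correct and follows essentially the same route as the paper, which simply cites the standard induction argument for part~(1) and notes that part~(2) follows from it; you have merely spelled out both steps explicitly, including the key observation that the $G$-Forth/$G$-Back clauses already deliver \emph{full} bisimilarity of successor pairs, which is exactly what licenses the appeal to part~(1).
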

\begin{proof}
The first statement is standard (see~\cite{blackburn_rijke_venema_2001}). The second statement easily follows from the first.
\end{proof}

\begin{theorem}[Product and pointed updates are bisimilar]
\label{th:sameupdates}
    Given a pointed Kripke model $(\calM,w)$ and pointed action model $(\calU,\alpha)$ such that $\calM,w \vDash \pre(\alpha)$, we have $\calM\odot\calU, (w,\alpha) \leftrightarroweq \calM\otimes\calU, (w,\alpha)$.
\end{theorem}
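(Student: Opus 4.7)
The plan is to exhibit the natural identity relation $\bisim \ce \bigl\{\bigl((v,\beta),(v,\beta)\bigr) \mid (v,\beta) \in S^\calU\bigr\}$ as a bisimulation between $\calM \odot \calU$ and $\calM \otimes \calU$, witnessing the pair $(w,\alpha)$. The key structural observation is that $S^\calU \subseteq T^\calU = S'$, so every element on the left-hand side appears verbatim in the right-hand side; furthermore, $(w,\alpha) \in S^\calU$ by construction (using the hypothesis $\calM,w \vDash \pre(\alpha)$ to place $(w,\alpha)$ in $T^\calU$), so the relation is nonempty and connects the two points.

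Next I would check the three bisimulation clauses. The Atoms clause is immediate, since $V^\calU(p)$ and $V'(p)$ are defined by exactly the same condition $v \in V(p)$ and therefore agree on $S^\calU$. The Forth clause is also immediate: unpacking \cref{def:prod_up} and \cref{def:pointed_update} shows that $R^\calU_i = R'_i \cap (S^\calU \times S^\calU)$, so any $R^\calU_i$-successor of $(v,\beta) \in S^\calU$ is an $R'_i$-successor in $\calM\otimes\calU$ that is trivially related to itself by $\bisim$.

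The main (and only nontrivial) step is the Back clause, where the closure property in the definition of $S^\calU$ does all the work. Suppose $(v,\beta) \in S^\calU$ and $(v,\beta) R'_i (u,\gamma)$ in $\calM \otimes \calU$. By \cref{def:prod_up}, this means $v R_i u$, $\beta Q_i \gamma$, and $(u,\gamma) \in T^\calU$. The closure condition in \cref{def:pointed_update} then forces $(u,\gamma) \in S^\calU$, hence $(v,\beta) R^\calU_i (u,\gamma)$ and $(u,\gamma) \bisim (u,\gamma)$, as required.

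There is no real obstacle here beyond recognizing that the inductive closure clause of $S^\calU$ was designed precisely to preserve the Back property; everything else is a straightforward unfolding of the definitions. Combining the three clauses yields $\calM \odot \calU, (w,\alpha) \leftrightarroweq \calM \otimes \calU, (w,\alpha)$, which in view of \cref{bisme}\eqref{bismeall} also justifies the subsequent use of pointed updates as a modally equivalent but smaller replacement for product updates.
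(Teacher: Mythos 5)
Your proposal is correct and follows essentially the same route as the paper: both exhibit the identity relation $\bigl\{\bigl((v,\beta),(v,\beta)\bigr) \mid (v,\beta) \in S^\calU\bigr\}$ as the bisimulation, relying on $(w,\alpha) \in S^\calU \subseteq S'$, $R^\calU = R'_{\restrict (S^\calU \times S^\calU)}$, and $V^\calU = V'_{\restrict S^\calU}$. The only difference is that you spell out the \textbf{Back} clause explicitly via the closure condition on $S^\calU$, which the paper leaves as ``easy to see.''
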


\begin{proof} 
Let $\calM= \langle S, R, V\rangle$, $\calU = \langle E, Q, \pre\rangle$, $\calM\odot\calU = \langle S^\calU, R^\calU, V^\calU\rangle$ and \mbox{$\calM \otimes \calU = \langle S',R',V' \rangle$}. By construction, $(w,\alpha) \in S^\calU \subseteq S'$, and $R^\calU =  R'_{\upharpoonright (S^\calU \times S^\calU)}$, and $V^\calU = V'_{\upharpoonright S^\calU}$.
Hence, it is easy to see that $\bisim \ce \left\{\bigl((v,\beta),(v,\beta)\bigr) \mid (v,\beta) \in S^\calU\right\}$ is the requisite bisimulation.
\end{proof}

\begin{theorem}[Updates preserve bisimilarity]
\label{th:updatepreserve}
    Given bisimilar pointed Kripke model $\calM,w\leftrightarroweq \calM',w'$ and bisimilar pointed action models $\calU,\alpha\leftrightarroweq\calU',\alpha'$ such that \mbox{$\calM,w \vDash \pre(\alpha)$}, we have $\calM\otimes\calU, (w,\alpha) \leftrightarroweq \calM'\otimes\calU', (w',\alpha')$ and \mbox{$\calM\odot\calU, (w,\alpha) \leftrightarroweq \calM'\odot\calU', (w',\alpha')$}. The same holds for $G$-bisimilarity.
\end{theorem}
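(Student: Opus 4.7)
The plan is to prove the four claims in the order: (i) product update preserves full bisimilarity, (ii) pointed update preserves full bisimilarity, (iii)--(iv) the corresponding $G$-bisimilarity versions. For (i), let $\bisim_M \subseteq S \times S'$ and $\bisim_U \subseteq E \times E'$ be the bisimulations witnessing $w \bisim_M w'$ and $\alpha \bisim_U \alpha'$. The natural candidate bisimulation between the two product updates is
\[
\bisim \ce \bigl\{\bigl((v,\beta),(v',\beta')\bigr) : v \bisim_M v',\ \beta \bisim_U \beta',\ \calM,v \vDash \pre(\beta)\bigr\},
\]
restricted to pairs that survive in the respective updated domains. The initial pair $\bigl((w,\alpha),(w',\alpha')\bigr)$ lies in $\bisim$: by hypothesis $\calM,w \vDash \pre(\alpha)$, and Theorem~\ref{bisme} together with the \textbf{Pre} clause of $\bisim_U$ yields $\calM',w' \vDash \pre'(\alpha')$, placing $(w',\alpha')$ in the updated domain.

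Next I would check the three bisimulation clauses. \textbf{Atoms} is immediate, since the product-update valuation depends only on the world coordinate and $v \bisim_M v'$ preserves atoms. For \textbf{Forth}, given $(v,\beta) R^{\calM \otimes \calU}_i (u,\gamma)$, i.e.\ $v R_i u$ and $\beta Q_i \gamma$, I apply Forth to $\bisim_M$ and to $\bisim_U$ independently, obtaining $u' \in S'$ and $\gamma' \in E'$ with $v' R_i' u'$, $\beta' Q_i' \gamma'$, $u \bisim_M u'$, and $\gamma \bisim_U \gamma'$. The remaining task is to certify that $(u',\gamma')$ sits in the domain of $\calM' \otimes \calU'$, which once more follows from $\calM, u \vDash \pre(\gamma)$, Theorem~\ref{bisme}, and \textbf{Pre}. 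The \textbf{Back} clause is symmetric.

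For (ii), rather than tracking reachability by hand inside $S^{\calU}$ and $S'^{\calU'}$, I would chain bisimilarities using Theorem~\ref{th:sameupdates} and the fact that the relational composition of two bisimulations is a bisimulation (so $\leftrightarroweq$ is transitive):
\[
\calM \odot \calU,\, (w,\alpha) \leftrightarroweq \calM \otimes \calU,\, (w,\alpha) \leftrightarroweq \calM' \otimes \calU',\, (w',\alpha') \leftrightarroweq \calM' \odot \calU',\, (w',\alpha'),
\]
with the outer links supplied by Theorem~\ref{th:sameupdates} and the middle link by~(i).

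For (iii)--(iv), $G$-bisimilarity only constrains the very first move from the distinguished pair and demands full bisimilarity of the resulting successors. So the first-step verification uses $G$-Forth/$G$-Back on the given $G$-bisimulations between $\calM,\calM'$ and between $\calU,\calU'$ to produce matching successors that are fully bisimilar at the model and action levels; then (i) (resp.\ (ii)) lifts this to full bisimilarity of the successor states in the product (resp.\ pointed) update. I expect the principal obstacle throughout to be the precondition-matching bookkeeping: every time Forth/Back produces a target pair, one must certify that it lies in the updated domain, and this relies uniformly on \textbf{Pre} together with Theorem~\ref{bisme}. Once this machinery is set up for (i), it reuses essentially verbatim in (ii)--(iv).
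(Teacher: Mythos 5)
Your proposal is correct and follows essentially the same route as the paper: the paper cites the standard literature for the product-update case (whose proof is exactly your explicit relation $\bisim$ on matched pairs with the \textbf{Pre}-based domain check), derives the pointed-update case by composing with Theorem~\ref{th:sameupdates} just as you do, and leaves the $G$-bisimilarity cases as "easily following," which your first-step argument fills in correctly.
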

\begin{proof}
For product updates, the proof can be found in \cite{ditmarsch2007dynamic}. For pointed updates, it then follows from Theorem~\ref{th:sameupdates}. The statements for $a$-bisimilarity easily follow.
\end{proof}

The advantage of pointed updates can be illustrated by the fact that, while repeated product updates of $\calM$ from Fig.~\ref{fig:LT_a} with~$\calU$ from Fig.~\ref{BOOM_1} leads to the exponential growth of the domain, pointed updates yield the three-world pointed Kripke model depicted left in Fig.~\ref{BOOM_2}, no matter how many times the pointed update is performed:
\[
\bigl((\calM \odot \calU) \odot \calU\bigr) \odot \dots \odot \calU = \calM \otimes \calU.
\]

In the following section, we use this mechanism to solve the consistent update synthesis problem.

\section{Consistent Update Synthesis:  Deterministic Belief Increase}
\label{subsec:Priv}

In this section, we propose a solution to the consistent update synthesis task for \emph{goal formulas}~$\varphi$ representing \emph{deterministic belief increase}~(DBI) by  generating a pointed action model $\calU_\phi$ such that the result of a pointed update of any given pointed Kripke model with $\calU_\phi$ satisfies $\phi$ and inconsistent beliefs (including higher-order beliefs) are not introduced  whenever they are possible to avoid.
By deterministic belief increase we mean situations when several agents are prescribed additional beliefs (including higher-order beliefs) without creating alternative ways of fulfilling the prescription. Before giving the formal definition, it should be mentioned that there is always a trivial way of creating new beliefs by making the agents' beliefs inconsistent, i.e., making them believe all statements including the desired ones. However, making agents' reasoning inconsistent does not comport with the ideology of minimal change, nor is productive in terms of correcting agents' incidental false beliefs.
The main novelty in our method of update synthesis is the aim to preserve consistency whenever possible.
Now we give a formal definition of goal formulas representing deterministic belief increase:

\begin{definition}[DBI goal formulas, DBI normal form]%
\label{def:goal_formula}
\emph{DBI goal formulas}~$\varphi$, or \emph{DBI~formulas} for short, are defined by the following BNF:
	\begin{equation} \label{eq:goal_formula}
			\varphi  \coloncolonequals  B_i \xi \mid B_i(\xi \wedge \varphi) \mid (\varphi \wedge \varphi) \mid B_i \varphi
	\end{equation}
where $\xi$ is any purely propositional formula and $i \in \agents$.
Thus, each DBI goal formula is a non-empty conjunction of belief operators. 
\emph{Formulas in DBI normal form} are DBI formulas $\varphi$ obtained by restricting the construction as follows: $B_i \xi$~is always DBI~normal; $B_i\varphi$ and $B_i(\xi \wedge \varphi)$ are DBI normal if{f} $\varphi$ is DBI normal and $i \notin\ta(\varphi)$; $\varphi \wedge \psi$ is DBI normal if{f} $\varphi$ and $\psi$ are DBI normal and $\ta(\varphi) \cap \ta(\psi) = \varnothing$.
\end{definition}

\begin{lemma}
For any DBI goal formula $\phi$, there exists a DBI normal formula~$\phi'$ such that $\mathcal{K}45 \vDash \phi \equiv \phi'$, where $\mathcal{K}45$ is the class of all transitive\footnote{If $vR_i u$ and $uR_i w$ then $vR_i w$.} and euclidean\footnote{If $vR_i u$ and $vR_i w$ then $uR_i w$.} frames.\looseness=-1
\end{lemma}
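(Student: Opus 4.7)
The plan is to proceed by induction on the modal depth $d(\varphi)$, relying on two $\mathcal{K}45$-valid equivalences: (i) the K-theorem $B_i\alpha\wedge B_i\beta \equiv B_i(\alpha\wedge\beta)$ and (ii) the collapsing identity $B_i(\alpha\wedge B_i\beta) \equiv B_i(\alpha\wedge\beta)$. Identity~(ii) follows from $B_iB_i\beta\equiv B_i\beta$, which itself holds because in any transitive euclidean frame the $i$-successors of an $i$-successor of $w$ coincide with the $i$-successors of $w$, so $B_i\beta$ has a constant truth value at $w$ and at each of its $i$-successors; combined with K-distribution this gives~(ii).

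For the base case $d(\varphi)=1$, $\varphi$ is a conjunction of formulas $B_{i_j}\xi_j$ with propositional $\xi_j$. Grouping by $i_j$ and applying~(i) yields $\bigwedge_{i\in I}B_i\xi_i$ with distinct $i$'s, which is DBI normal by the base clause of \cref{def:goal_formula}. For the inductive step, I first flatten the top-level conjunctive structure of $\varphi$ so that every conjunct has the form $B_i(\ldots)$ (always possible from the grammar for DBI formulas), then group conjuncts by their outer agent and apply~(i) to obtain $\varphi \equiv \bigwedge_{i\in I} B_i\chi_i$ with pairwise distinct $i\in I$. Each $\chi_i$ decomposes as $\xi_i\wedge\eta_i$ with $\xi_i$ propositional (possibly $\top$) and $\eta_i$ DBI (possibly absent); when $\eta_i$ is present, $d(\eta_i)<d(\varphi)$, so the induction hypothesis yields a DBI normal $\eta_i'\equiv\eta_i$.

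The main obstacle is that $\eta_i'$ may still contain a $B_i$-headed top-level conjunct, violating the side condition $i\notin\ta(\eta_i')$ required by \cref{def:goal_formula}. To handle this, I would write $\eta_i' = B_i\delta_i \wedge \eta_i''$ with $i\notin\ta(\eta_i'')$ (a unique decomposition because $\eta_i'$ is already normal with pairwise disjoint top-level targets), further split $\delta_i = \xi_i'\wedge\eta_i^*$ with $\xi_i'$ propositional and $\eta_i^*$ DBI normal satisfying $i\notin\ta(\eta_i^*)$, and apply~(ii) to rewrite
\[
B_i(\xi_i\wedge B_i\delta_i\wedge\eta_i'') \;\equiv\; B_i\bigl(\xi_i\wedge\xi_i'\wedge(\eta_i^*\wedge\eta_i'')\bigr).
\]
Since $d(\eta_i^*\wedge\eta_i'') \leq d(\varphi)-1$, a second appeal to the induction hypothesis supplies a DBI normal $\eta$ equivalent to $\eta_i^*\wedge\eta_i''$.

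To finish I would establish the invariant that the normalization procedure never enlarges the top-level $\ta$ of a formula and never increases its modal depth; both are preserved by the flatten/group/merge operations as well as by the collapsing step~(ii), since the former reassembles conjuncts over the same set of outer agents and the latter strictly lowers the inner $B_i$-nesting. Consequently $\ta(\eta)\subseteq\ta(\eta_i^*)\cup\ta(\eta_i'')\not\ni i$, so $B_i(\xi_i\wedge\xi_i'\wedge\eta)$ is DBI normal. Since the outer conjuncts $B_i(\ldots)$ for distinct $i\in I$ have pairwise disjoint singleton targets, their conjunction is DBI normal, completing the induction. The most delicate point is the bookkeeping that justifies the twice-used induction hypothesis together with the target-agent invariant through the collapsing rewrites; everything else amounts to routine $\mathcal{K}45$-reasoning.
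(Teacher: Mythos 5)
Your proof is correct and takes essentially the same approach as the paper: the paper's proof consists precisely of citing the three $\mathcal{K}45$ equivalences $B_iB_i\theta\equiv B_i\theta$, $B_i(\xi\land B_i\theta)\equiv B_i(\xi\land\theta)$, and $B_i\theta\land B_i\eta\equiv B_i(\theta\land\eta)$, which are exactly your (i) and (ii). You simply spell out the inductive bookkeeping (grouping by outer agent, collapsing residual $B_i$-conjuncts, and the depth/target-set invariants) that the paper leaves implicit.
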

\begin{proof}
It is sufficient to use the following $ \mathcal{K}45$ logical equivalences: $B_iB_i \theta \equiv B_i \theta$, $B_i(\xi \land B_i \theta) \equiv B_i(\xi \land \theta)$, and $B_i\theta \land B_i \eta \equiv B_i(\theta \land \eta)$.
\end{proof}

\begin{lemma}[$G$-bisimilarity and goal formula preservation]
\label{lem:Gbisimgoal}
If $\calM, v\leftrightarroweq_G\calM', v'$ and $\phi$~is a DBI formula with $\ta(\phi) \subseteq G $, then $\calM, v \vDash \phi$ if{f}  $\calM', v' \vDash \phi$.
\end{lemma}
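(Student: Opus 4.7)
The plan is to prove the lemma by structural induction on the DBI formula $\phi$ following the BNF in \cref{def:goal_formula}, exploiting the fact that $G$-bisimilarity demands \emph{full} bisimilarity (not merely $G$-bisimilarity) of the $i$-successors whenever $i \in G$, which by \cref{bisme} guarantees full modal equivalence of those successors.

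First I would handle the base case $\phi = B_i \xi$ with $\xi$ purely propositional and $\ta(\phi) = \{i\} \subseteq G$. For the forward direction, suppose $\calM', v' \nvDash B_i \xi$, witnessed by some $u'$ with $v' R_i' u'$ and $\calM', u' \nvDash \xi$. By $G$-back applied to agent $i \in G$, there is $u \in S$ with $v R_i u$ and $\calM, u \leftrightarroweq \calM', u'$; \cref{bisme} then yields $\calM, u \nvDash \xi$, so $\calM, v \nvDash B_i \xi$. The converse direction is symmetric, using $G$-forth.

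The crucial observation is that the inductive cases $\phi = B_i \psi$ and $\phi = B_i(\xi \wedge \psi)$ go through by the very same argument: the body of the outermost $B_i$ is some modal formula $\theta$, and whatever its internal structure, full bisimilarity $\calM, u \leftrightarroweq \calM', u'$ together with \cref{bisme} gives $\calM, u \vDash \theta$ iff $\calM', u' \vDash \theta$. In particular, no induction hypothesis need be invoked underneath a belief operator. For the conjunctive case $\phi = \phi_1 \wedge \phi_2$, since $\ta(\phi_1) \cup \ta(\phi_2) = \ta(\phi) \subseteq G$, the induction hypothesis applies to each conjunct separately.

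There is essentially no obstacle: the heavy lifting is performed by \cref{bisme}, which is already in hand. The only conceptual point worth articulating is that the side condition $\ta(\phi) \subseteq G$ controls precisely the outermost belief operators — the ones unpacked via the forth/back clauses of $G$-bisimilarity — while deeper modalities are handled for free by the full bisimilarity of the successors reached in one such step.
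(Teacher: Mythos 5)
Your proof is correct and follows essentially the same route as the paper's: both arguments rest on the observation that a DBI formula is a conjunction of formulas $B_i\psi_i$ with $i \in \ta(\phi) \subseteq G$, so that only the outermost belief operators need the $G$-forth/$G$-back clauses, after which full bisimilarity of the successors (via \cref{bisme}) handles the bodies. The only difference is that you inline the argument that the paper delegates to \cref{bisme}~(\textit{2}), so your version is just a more explicit rendering of the same idea.
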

\begin{proof}
It follows from Theorem~\ref{bisme} $(\textit{2})$
and the fact that $\phi$ is a conjunction of $B_i \psi_i$ for $i \in \ta(\phi) \subseteq G$.
\end{proof}

Without loss of generality, from now on, we only consider formulas in DBI normal form. For instance, formula~\eqref{blt_goal} is a DBI formula but not DBI normal. Its normal form would be $B_b\bigl(B_tp \land B_l(p \land B_t p)\bigr)$.
Given such a DBI normal formula $\varphi$, we now construct a pointed action model~$(\calU_\phi,0)$ such that for any  pointed Kripke model $(\calM, w)$, the pointed update $\calM\odot\calU_\phi$ is defined,  $\calM\odot\calU_\phi, (w,0) \vDash \phi$, while other epistemic differences between $(\calM, w)$ and $\left(\calM\odot\calU_\phi, (w,0)\right)$ are minimized.
Informally, this means that an update should not influence any beliefs except for those explicitly stated in $\phi$ and logically following from $\phi$ based on  the agents' pre-update beliefs \footnote{We adopt the minimality principle in the same spirit of the standard AGM approach \cite{Belief_rev_handbook} for which an update should lead to the loss of as few previous beliefs as possible~\cite{sep-logic-belief-revision}.}.
In particular, if $j \notin \ta(\phi)$, then $j$'s beliefs should not be affected even if $B_j$ occurs in $\phi$: updating somebody else's beliefs about $j$'s beliefs should not affect actual $j$'s beliefs.

\begin{definition}[Consistent DBI update synthesis] \label{def:synth_am1}
Given a formula~$\varphi$ in DBI normal form the pointed action model $(\calU_{\varphi},0) = (\langle E^\phi, Q^\phi, \pre^\phi \rangle, 0)$ is constructed  recursively based on the structure of $\phi$. In all cases, $E^\phi= \{0,-1\}\sqcup D^\phi$ with $0 \ne -1$ for some $\varnothing\ne D^\phi \subseteq\mathbb{N}$, event $0$ representing the actual event, unknowable for all, has no incoming arrows, i.e., not $\alpha Q^\phi_i 0$, and event $-1$ representing status quo/no change  has no outgoing arrows other than reflexive loops, i.e., $-1Q^\phi_i\alpha$ if{f} $\alpha = -1$. \mbox{$\pre^\phi(0) = \pre^\phi(-1) \ce \top$}. Let $\underQ_j^\phi \ce Q^\phi_j \cap \bigl((E^\phi \setminus\{0\}) \times (E^\phi \setminus\{0\})\bigr)$ be the restriction of $Q_j^\phi$ onto $E^\phi \setminus \{0\}$ for any agent $j$. 
\begin{description}
\item[1] If $\phi = B_i \xi$ for a propositional formula $\xi$, then $D^{B_i \xi}=\{m\}$ for a fresh $m\geq1$, $\pre^{B_i \xi}(m) \ce \xi$, $Q^{B_i \xi}_j \ce \{(0,-1), (m,-1), (-1,-1)\}$ for any $j\ne i$,  while at the same time $Q^{B_i \xi}_i \ce \{(0,m), (m,m), (-1,-1)\}$.
\item[2] If $\phi = B_i \psi$ for a DBI normal formula $\psi$, then $\calU_\psi =\langle \{0,-1\}\sqcup D^\psi, Q^\psi, \pre^\psi\rangle$ has already been constructed. Let $1 \leq m\notin D^\psi$ be fresh. Then $D^\phi \ce D^\psi \sqcup \{m\}$, $\pre^\phi\ce \pre^\psi \sqcup \{(m, \top)\}$, i.e., $\pre^\phi$ agrees with $\pre^\psi$ and extends it to $\pre^\phi(m) \ce \top$, and the accessibility relations are defined as follows:  
\begin{align*}
Q^\phi_j &\ce \underQ_j^\psi \sqcup \{(0,-1)\}\sqcup \{(m,k) \mid (0,k) \in Q^\psi_j\} \qquad \text{for any $j \ne i$;}
\\
Q^\phi_i &\ce \underQ_i^\psi \sqcup \{(0,m), (m,m)\} .
\end{align*}
\item[3] If $\phi = B_i (\xi\land \psi)$ for a DBI normal  formula $\psi$ and propositional formula $\xi$, the construction is the same as in the preceding case with the only change being that $\pre^\phi(m) \ce \xi$.
\item[4] If $\phi = \psi \land \theta$ for DBI normal formulas $\psi$ and $\theta$, then $\calU_\psi =\langle \{0,-1\}\sqcup D^\psi, Q^\psi, \pre^\psi\rangle$ and $\calU_\theta =\langle \{0,-1\}\sqcup D^\theta, Q^\theta, \pre^\theta\rangle$ have already been constructed.
We can assume w.l.o.g.~that $D^\psi \cap D^\theta = \varnothing$ (otherwise, just rename events in one of them). Take $D^\phi = D^\psi \sqcup D^\theta$, set $\pre^\phi = \pre^\psi \cup \pre^\theta$ and, for any $j$,
\begin{multline*}
Q^\phi_j \ce \underQ_j^\psi \cup \underQ_j^\theta \sqcup \{(0,k)\mid  (0,k) \in Q^\psi_j \cup Q^\theta_j, k \in D^\psi\sqcup D^\theta\} \sqcup\\ \{(0,-1)\mid (0,k)\notin Q^\psi_j \cup Q^\theta_j \text{ for any $k \in D^\psi\sqcup D^\theta$}\}.
\end{multline*}
\end{description}
 \end{definition}

We illustrate this update synthesis method by applying to Example~\ref{Ex:BLT}:

\begin{example}
 For the DBI normal form  $\varphi=B_b(B_t p \land B_l(p \land B_t p))$ of formula~\eqref{blt_goal} from Example~\ref{Ex:BLT},   pointed action model $(\calU_{\varphi}, 0)$ constructed   according to  \cref{def:synth_am1}  can be found in the left part of  Fig.~\ref{fig:blt_action_model} (stages of construction can be seen in Fig.~\ref{fig:modsyntree}).  The result of the pointed update of $(\calM,v)$ from the left part of Fig.~\ref{fig:LT_a} with this $(\calU_{\varphi}, 0)$ can be seen on the right of Fig.~\ref{fig:blt_action_model} (and is isomorphic to the right part of~\cref{fig:LT_a}). It is easy to see that $\calM\odot\calU_\varphi, (v,0) \vDash \phi$. Since all accessibility relations in this updated model~$\calM\odot\calU_\varphi$ are serial, it follows that it is common belief that all agents have consistent beliefs.

 In addition, we claim that this update synthesis has been achieved with minimal change to agents' beliefs. Indeed, it is easy to prove using $G$-bisimilarities for appropriate groups of agents that 
 \begin{align*}
 \calM,v&\equiv_{l,t}\calM\odot\calU_\varphi, (v,0)
 &
 \calM,v&\equiv_{b,l}\calM\odot\calU_\varphi, (u,1)
 \\
 \calM,v&\equiv_{b}\calM\odot\calU_\varphi, (u,2)
 &
 \calM,v&\equiv_{b,l}\calM\odot\calU_\varphi, (u,3)
 \end{align*}
  where  $v$ is the actual world and $u$ is the other world of~$\calM$
 (we omit set braces in the subscript of $\equiv$).
 The statements in the first column mean that  the update is imperceptible for agents $l$ and $t$ and that agent~$b$ believes that agent $l$ does not think $b$'s beliefs have changed. Similarly, the second column testifies that $b$ believes himself and thinks that $l$~believes that  $t$ does not think either $b$ or $l$ changed their beliefs. In other words, the only higher-order beliefs that are affected by the update are those explicitly dictated by~$\phi$. \looseness=-1
\end{example}
We illustrate the advantage of using pointed updates over the product update operation using Example~\ref{Ex:BLT}:
\begin{example}
Consider the same example but using the product update of standard DEL: The result of updating the initial model with the action model obtained from~$\varphi$ is shown in \cref{fig:BTLproduct1}.
In particular, the result of the product update has only one additional world w.r.t. the model obtained via pointed updates $\calM\odot\calU_\varphi$ of \cref{fig:blt_action_model} (left), which is not reachable from the actual world, namely the $p$ world in the $v$ cluster at the top.
However, the difference between the two operations is striking in case of iterated updates: applying to the resulting model the product update operation again with the  same action model leads to the Kripke model sketched in \cref{fig:blt_action_twice_update_model}, which contains a number of worlds that are not reachable from the actual world but copied in every sub-model nonetheless.
Compare this to the pointed update operation, whose result is the same for any number of iterations for the same action model (\cref{thm:synth_ac_idempotent}).
\end{example}

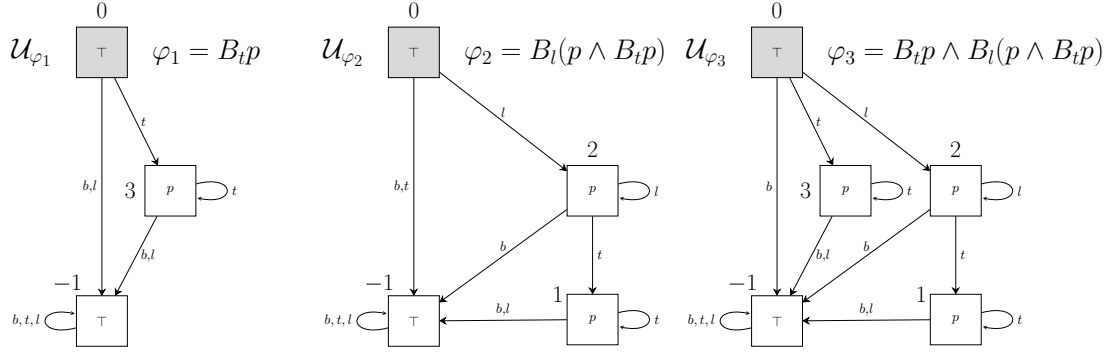
\begin{figure}[t]
    \centering
    \resizebox{15cm}{!}{%
\begin{subfigure}{0.6\textwidth}
\begin{tikzpicture}
    \begin{scope}[xshift=-6cm,>=stealth, yshift=3cm]
    \node[fill=gray!30,minimum size=1.3cm,draw] (0) {$\top$};
    \node [left=.5cm of 0] {\textbf{\huge{$\calU_{\varphi_1}$}}};
    \node [right=.5cm of 0] {\textbf{\huge{$\varphi_1= B_t p$}}};
    \node [above=.1cm of 0] {\Large{$0$}};
    \node (g) [below = 1cm  of 0] {};
    \node (n) [below = 1cm  of 0] {};
    \node (2) [right = 1.5cm  of n] {};
    \node[minimum size=1.3cm,draw] (3) [below = 1cm  of 2] {$p$};
    \node [left=.1cm of 3] {\Large{$3$}};
    \node[minimum size=1.3cm,draw] (6) [below = 5.6cm  of 0]  {$\top$};
    \node (x) [left =.1cm  of 6] {};
    \node [above=.5cm of x] {\Large{$-1$}};
    
            \path[->, -{Stealth[scale width=1.5]}]
        (0) edge node[right] {$t$} (3)
        (0) edge node[left] {$b$,$l$} (6)
        (3) edge node[right] {$b$,$l$} (6)
          ;
          \path[->,-{Stealth[scale width=1.5]}] 
          (3) edge[loop right] node[right] {$t$} (3)
          (6) edge[loop left] node[left] {$b,t,l$} (6);
    \end{scope}
\end{tikzpicture}
\label{8}
\end{subfigure}

\hfill

\begin{subfigure}{0.7\textwidth}
\begin{tikzpicture}
    \begin{scope}[xshift=-6cm,>=stealth, yshift=3cm]
    \node[fill=gray!30,minimum size=1.3cm,draw] (0) {$\top$};
    \node [left=.5cm of 0] {\textbf{\huge{$\calU_{\varphi_2}$}}};
    \node [right=.5cm of 0] {\textbf{\huge{$\varphi_2= B_l (p \wedge B_t p)$}}};
    \node [above=.1cm of 0] {\Large{$0$}};
    \node (g) [below = 1cm  of 0] {};
    \node (n) [below = 1cm  of 0] {};
    \node (2) [right = 1.5cm  of n] {};
    \node[minimum size=1.3cm,draw] (4) [right = 1.5cm  of 3] {$p$};
    \node [above=.1cm of 4] {\Large{$2$}};
    \node[minimum size=1.3cm,draw] (5) [below = 2cm  of 4] {$p$};
    \node (r) [left =.1cm  of 5] {};
    \node [above=.2cm of r] {\Large{$1$}};
    \node[minimum size=1.3cm,draw] (6) [below = 5.6cm  of 0]  {$\top$};
    \node (x) [left =.1cm  of 6] {};
    \node [above=.5cm of x] {\Large{$-1$}};
    
            \path[->, -{Stealth[scale width=1.5]}]
        (0) edge node[above] {$l$} (4)
        (0) edge node[left] {$b$,$t$} (6)
        (4) edge node[right] {$t$} (5)
        (4) edge node[above] {$b$} (6)
        (5) edge node[above] {$b$,$l$} (6)
          ;
          \path[->,-{Stealth[scale width=1.5]}] 
          (4) edge[loop right] node[right] {$l$} (4)
          (5) edge[loop right] node[right] {$t$} (5)
          (6) edge[loop left] node[left] {$b,t,l$} (6);
    \end{scope}
\end{tikzpicture}
\label{8}
\end{subfigure}

\hfill

\begin{subfigure}{0.9\textwidth}
\begin{tikzpicture}
    \begin{scope}[xshift=-6cm,>=stealth, yshift=3cm]
    \node[fill=gray!30,minimum size=1.3cm,draw] (0) {$\top$};
    \node [left=.5cm of 0] {\textbf{\huge{$\calU_{\varphi_3}$}}};
    \node [right=.5cm of 0] {\textbf{\huge{$\varphi_3= B_t p \wedge B_l (p \wedge B_t p)$}}};
    \node [above=.1cm of 0] {\Large{$0$}};
    \node (g) [below = 1cm  of 0] {};
    \node (n) [below = 1cm  of 0] {};
    \node (2) [right = 1.5cm  of n] {};
    \node[minimum size=1.3cm,draw] (3) [below = 1cm  of 2] {$p$};
    \node [left=.1cm of 3] {\Large{$3$}};
    \node[minimum size=1.3cm,draw] (4) [right = 1.5cm  of 3] {$p$};
    \node [above=.1cm of 4] {\Large{$2$}};
    \node[minimum size=1.3cm,draw] (5) [below = 2cm  of 4] {$p$};
    \node (r) [left =.1cm  of 5] {};
    \node [above=.2cm of r] {\Large{$1$}};
    \node[minimum size=1.3cm,draw] (6) [below = 5.6cm  of 0]  {$\top$};
    \node (x) [left =.1cm  of 6] {};
    \node [above=.5cm of x] {\Large{$-1$}};
    
            \path[->, -{Stealth[scale width=1.5]}]
        (0) edge node[right] {$t$} (3)
        (0) edge node[above] {$l$} (4)
        (0) edge node[left] {$b$} (6)
        (3) edge node[right] {$b$,$l$} (6)
        (4) edge node[right] {$t$} (5)
        (4) edge node[above] {$b$} (6)
        (5) edge node[above] {$b$,$l$} (6)
          ;
          \path[->,-{Stealth[scale width=1.5]}] 
          (3) edge[loop right] node[right] {$t$} (3)
          (4) edge[loop right] node[right] {$l$} (4)
          (5) edge[loop right] node[right] {$t$} (5)
          (6) edge[loop left] node[left] {$b,t,l$} (6);
    \end{scope}
\end{tikzpicture}
\label{8}
\end{subfigure}

    }
    \caption{Intermediate stages in constructing action model $\calU_\varphi$ for DBI formula~\eqref{blt_goal} from Example~\ref{Ex:BLT} based on its DBI normal form $\varphi=B_b(B_t p \land B_l(p \land B_t p))$. \looseness=-1}
    \label{fig:modsyntree}
\end{figure}
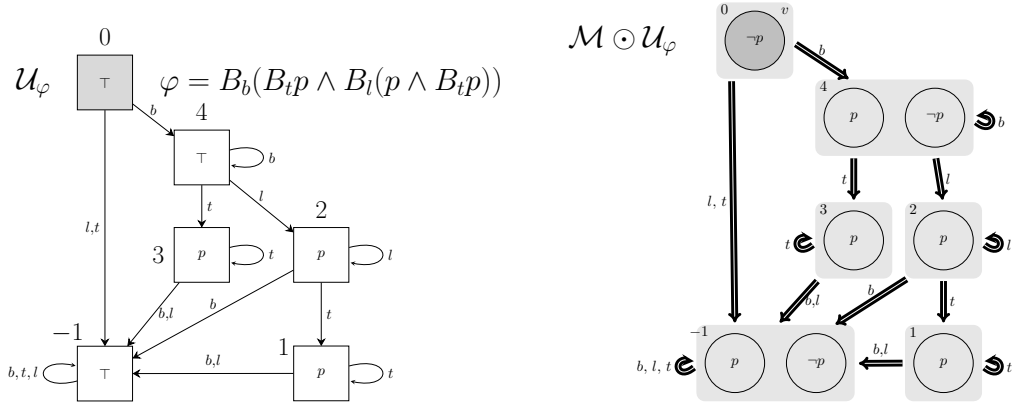
\begin{figure}[h!]
    \centering
    \resizebox{14cm}{!}{%

    \begin{subfigure}{1\textwidth}
\begin{tikzpicture}
    \begin{scope}[xshift=-6cm,>=stealth, yshift=3cm]
    \node[fill=gray!30,minimum size=1.3cm,draw] (0) {$\top$};
    \node [left=.5cm of 0] {\textbf{\huge{$\calU_\varphi$}}};
    \node [right=.5cm of 0] {\textbf{\huge{$\varphi= B_b (B_t p \wedge B_l (p \wedge B_t p))$}}};
    \node [above=.1cm of 0] {\Large{$0$}};
    \node (g) [below = 1cm  of 0] {};
    \node (n) [below = 1cm  of 0] {};
    \node[minimum size=1.3cm,draw] (2) [right = 1.5cm  of n] {$\top$};
    \node [above=.1cm of 2] {\Large{$4$}};
    \node[minimum size=1.3cm,draw] (3) [below = 1cm  of 2] {$p$};
    \node [left=.1cm of 3] {\Large{$3$}};
    \node[minimum size=1.3cm,draw] (4) [right = 1.5cm  of 3] {$p$};
    \node [above=.1cm of 4] {\Large{$2$}};
    \node[minimum size=1.3cm,draw] (5) [below = 1.5cm  of 4] {$p$};
    \node (r) [left =.1cm  of 5] {};
    \node [above=.2cm of r] {\Large{$1$}};
    \node[minimum size=1.3cm,draw] (6) [below = 5.6cm  of 0]  {$\top$};
    \node (x) [left =.1cm  of 6] {};
    \node [above=.5cm of x] {\Large{$-1$}};
    
            \path[->, -{Stealth[scale width=1.5]}]
        (0) edge node[above] {$b$} (2)
        (2) edge node[right] {$t$} (3)
        (2) edge node[above] {$l$} (4)
        (0) edge node[left] {$l$,$t$} (6)
        (3) edge node[right] {$b$,$l$} (6)
        (4) edge node[right] {$t$} (5)
        (4) edge node[above] {$b$} (6)
        (5) edge node[above] {$b$,$l$} (6)
          ;
          \path[->,-{Stealth[scale width=1.5]}] 
          (2) edge[loop right] node[right] {$b$} (2)
          (3) edge[loop right] node[right] {$t$} (3)
          (4) edge[loop right] node[right] {$l$} (4)
          (5) edge[loop right] node[right] {$t$} (5)
          (6) edge[loop left] node[left] {$b,t,l$} (6);
    \end{scope}
\end{tikzpicture}
\label{8}
\end{subfigure}

\hfill

\begin{subfigure}{0.9\textwidth}
\begin{tikzpicture}
    \begin{scope}[xshift=-6cm,>=stealth, yshift=3cm]
    \node[m,fill=gray!50,label=135:{0},label=45:{$v$}] (0) {$\neg p$};
    \node (g) [below = 1cm  of 0] {};
    \node [left=1cm of 0] {\textbf{\huge{$\calM\odot\calU_\varphi$}}};
    \node (n) [below = 1cm  of 0] {};
    \node[m,label=135:{4}] (2a) [right = 1.5cm  of n] {$p$};
    \node[m] (2b) [right = .5cm  of 2a] {$\neg p$};
    \node[m,,label=135:{3}] (3a) [below = 1.5cm  of 2a] {$p$};
    \node[m,label=135:{2}] (3b) [right = .7cm  of 3a] {$p$};
    \node[m,label=135:{1}] (4a) [below = 1.5cm  of 3b] {$p$};
    \node[m,label=135:{$-1$}] (6a) [left = 3.5cm  of 4a]  {$p$};
    \node[m] (6b) [right = .5cm  of 6a]  {$\neg p$};
    \node (g1) [right = 0.01cm  of 0] {};
    \node (g2) [above = 0.01cm  of 2a] {};
    \node (g3) [below = 0.01cm  of 2a] {};
    \node (g4) [above = 0.01cm  of 3a] {};
    \node (g5) [below = 0.01cm  of 2b] {};
    \node (g6) [above = 0.01cm  of 3b] {};
    \node (g7) [below = 0.01cm  of 3b] {};
    \node (g8) [above = 0.01cm  of 4a] {};
    \node (g9a) [below = 0.01cm  of 3a] {}; 
    \node (g9b) [left = 0.5cm  of g9a] {};
    \node (g10a) [above = 0.01cm  of 6a] {}; 
    \node (g10b) [right = .7cm  of g10a] {};
    \node (g11a) [below = 0.01cm  of 3b] {}; 
    \node (g11b) [left = 0.5cm  of g11a] {};
    \node (g12) [right = 2cm  of g10a] {};
    \node (g13) [left = 0.01cm  of 4a] {}; 
    \node (g14a) [right = 0.3cm  of g12] {}; 
    \node (g14b) [below = 0.6cm  of g14a] {};
    \node (g15a) [below = 0.01cm  of 0] {};
    \node (g15b) [left = 0.3cm  of g15a] {};
    \node (g16a) [right = 1cm  of g10a] {};
    \node (f1) [right = 0.01cm  of 2b] {};
    \node (f2) [right = 0.01cm  of 3b] {};
    \node (f3) [right = 0.01cm  of 4a] {};
    \node (f4) [left = 0.01cm  of 3a] {};
    \node (f5) [left = 0.01cm  of 6a] {};
    
    \draw[-{Implies},double,line width=2pt] (g1) -- (g2) node[midway,above] {$b$};
    \draw[-{Implies},double,line width=2pt] (g3) -- (g4) node[midway,left] {$t$};
    \draw[-{Implies},double,line width=2pt] (g5) -- (g6) node[midway,right] {$l$};
    \draw[-{Implies},double,line width=2pt] (g7) -- (g8) node[midway,right] {$t$};
    \draw[-{Implies},double,line width=2pt] (g9b) -- (g10b) node[midway,right] {$b$,$l$};
    \draw[-{Implies},double,line width=2pt] (g11b) -- (g12) node[midway,above] {$b$};
    \draw[-{Implies},double,line width=2pt] (g13) -- (g14b) node[midway,above] {$b$,$l$};
   \draw[-{Implies},double,line width=2pt] (g15b) -- (g10a) node[midway,left] {$l$, $t$};

    \path[->] (f1) edge[-{Implies},double,line width=2pt,reflexive right] node[right] {$b$} (f1);
    \path[->] (f2) edge[-{Implies},double,line width=2pt,reflexive right] node[right] {$l$} (f2);
    \path[->] (f3) edge[-{Implies},double,line width=2pt,reflexive right] node[right] {$t$} (f3);
    \path[->] (f4) edge[-{Implies},double,line width=2pt,reflexive left] node[left] {$t$} (f4);
    \path[->] (f5) edge[-{Implies},double,line width=2pt,reflexive left] node[left] {$b$, $l$, $t$} (f5);

    \end{scope}
\begin{pgfonlayer}{background}
\filldraw [line width=4mm,line join=round,black!10]
      (0.north  -| 0.east)  rectangle (0.south  -| 0.west);
 \filldraw [line width=4mm,line join=round,black!10]
      (2a.north  -| 2b.east)  rectangle (2a.south  -| 2a.west);
    \filldraw [line width=4mm,line join=round,black!10]
      (3a.north  -| 3a.east)  rectangle (3a.south  -| 3a.west);
       \filldraw [line width=4mm,line join=round,black!10]
      (3b.north  -| 3b.east)  rectangle (3b.south  -| 3b.west);
       \filldraw [line width=4mm,line join=round,black!10]
      (4a.north  -| 4a.east)  rectangle (4a.south  -| 4a.west);
      \filldraw [line width=4mm,line join=round,black!10]
      (6a.north  -| 6b.east)  rectangle (6a.south  -| 6a.west);
  \end{pgfonlayer}
\end{tikzpicture}
\label{5}
\end{subfigure}

    }
    \caption{Successful update synthesis $\calM\odot\calU_\varphi, (v,0) \vDash \phi$ by applying pointed update  with pointed action model $(\calU_\phi,0)$ for DBI formula~\eqref{blt_goal} from Example~\ref{Ex:BLT} to~$(\calM,v)$ from Fig.~\ref{fig:LT_a} (left).
    (Legend: the real event and world are dark gray; light gray rectangles are a compact way for representing arrows in the updated model: a double arrow with label $a\in\agents$ from one rectangle to another represents a set of accessibility arrows with label~$a$ from every world in the source rectangle to every world in the target rectangle).
    \looseness=-1}
    \label{fig:blt_action_model}
\end{figure}

\begin{figure}
    \centering
    \resizebox{7cm}{!}{%
\begin{tikzpicture}
    \begin{scope}[xshift=-6cm,>=stealth, yshift=3cm]
    \node[m,fill=gray!40,label=45:{$v$}] (0) {$\neg p$};
    \node[m] (0a) [left =.5cm  of 0] {$p$};
    \node (g) [below = 1cm  of 0] {};
    \node [left=1.3cm of 0a] {\textbf{\huge{$\calM \otimes \calU_\varphi$}}};
    \node (n) [below = 1cm  of 0] {};
    \node[m] (2a) [right = 1.5cm  of n] {$p$};
    \node[m] (2b) [right = .5cm  of 2a] {$\neg p$};
    \node[m] (3a) [below = 1.5cm  of 2a] {$p$};
    \node[m] (3b) [right = .7cm  of 3a] {$p$};
    \node[m] (4a) [below = 1.5cm  of 3b] {$p$};
    \node[m] (6a) [left = 3.5cm  of 4a]  {$p$};
    \node[m] (6b) [right = .5cm  of 6a]  {$\neg p$};
    \node (g1) [right = 0.01cm  of 0] {};
    \node (g2) [above = 0.01cm  of 2a] {};
    \node (g3) [below = 0.01cm  of 2a] {};
    \node (g4) [above = 0.01cm  of 3a] {};
    \node (g5) [below = 0.01cm  of 2b] {};
    \node (g6) [above = 0.01cm  of 3b] {};
    \node (g7) [below = 0.01cm  of 3b] {};
    \node (g8) [above = 0.01cm  of 4a] {};
    \node (g9a) [below = 0.01cm  of 3a] {}; 
    \node (g9b) [left = 0.5cm  of g9a] {};
    \node (g10a) [above = 0.01cm  of 6a] {}; 
    \node (g10b) [right = .7cm  of g10a] {};
    \node (g11a) [below = 0.01cm  of 3b] {}; 
    \node (g11b) [left = 0.5cm  of g11a] {};
    \node (g12) [right = 2cm  of g10a] {};
    \node (g13) [left = 0.01cm  of 4a] {}; 
    \node (g14a) [right = 0.3cm  of g12] {}; 
    \node (g14b) [below = 0.6cm  of g14a] {};
    \node (g15a) [below = 0.01cm  of 0] {};
    \node (g15b) [left = 0.3cm  of g15a] {};
    \node (g16a) [right = 1cm  of g10a] {};
    \node (f1) [right = 0.01cm  of 2b] {};
    \node (f2) [right = 0.01cm  of 3b] {};
    \node (f3) [right = 0.01cm  of 4a] {};
    \node (f4) [left = 0.01cm  of 3a] {};
    \node (f5) [left = 0.01cm  of 6a] {};

    \draw[-{Implies},double,line width=2pt] (g1) -- (g2) node[midway,above] {$b$};
    \draw[-{Implies},double,line width=2pt] (g3) -- (g4) node[midway,left] {$t$};
    \draw[-{Implies},double,line width=2pt] (g5) -- (g6) node[midway,right] {$l$};
    \draw[-{Implies},double,line width=2pt] (g7) -- (g8) node[midway,right] {$t$};
    \draw[-{Implies},double,line width=2pt] (g9b) -- (g10b) node[midway,right] {$b$,$l$};
    \draw[-{Implies},double,line width=2pt] (g11b) -- (g12) node[midway,above] {$b$};
    \draw[-{Implies},double,line width=2pt] (g13) -- (g14b) node[midway,above] {$b$,$l$};
   \draw[-{Implies},double,line width=2pt] (g15b) -- (g10a) node[midway,right] {$l$, $t$};
    
    \path[->] (f1) edge[-{Implies},double,line width=2pt,reflexive right] node[right] {$b$} (f1);
    \path[->] (f2) edge[-{Implies},double,line width=2pt,reflexive right] node[right] {$l$} (f2);
    \path[->] (f3) edge[-{Implies},double,line width=2pt,reflexive right] node[right] {$t$} (f3);
    \path[->] (f4) edge[-{Implies},double,line width=2pt,reflexive left] node[left] {$t$} (f4);
    \path[->] (f5) edge[-{Implies},double,line width=2pt,reflexive left] node[left] {$b$, $l$, $t$} (f5);

    \end{scope}
\begin{pgfonlayer}{background}
\filldraw [line width=4mm,line join=round,black!10]
      (0.north  -| 0.east)  rectangle (0.south  -| 0a.west);
 \filldraw [line width=4mm,line join=round,black!10]
      (2a.north  -| 2b.east)  rectangle (2a.south  -| 2a.west);
    \filldraw [line width=4mm,line join=round,black!10]
      (3a.north  -| 3a.east)  rectangle (3a.south  -| 3a.west);
       \filldraw [line width=4mm,line join=round,black!10]
      (3b.north  -| 3b.east)  rectangle (3b.south  -| 3b.west);
       \filldraw [line width=4mm,line join=round,black!10]
      (4a.north  -| 4a.east)  rectangle (4a.south  -| 4a.west);
      \filldraw [line width=4mm,line join=round,black!10]
      (6a.north  -| 6b.east)  rectangle (6a.south  -| 6a.west);
  \end{pgfonlayer}
\end{tikzpicture}
}
\label{9}
\caption{Balder--Loki--Thor updated model using product updates  $\calM \otimes \mathcal{U}_\varphi$. The real world $v$ is underlined. 
    Grey rectangles are a compact way for representing arrows: A double arrow from a rectangle to another represents a set of accessibility arrows with the same agent's label from every world in the source rectangle to every world in the target rectangle.\looseness=-1}
    \label{fig:BTLproduct1}
\end{figure}
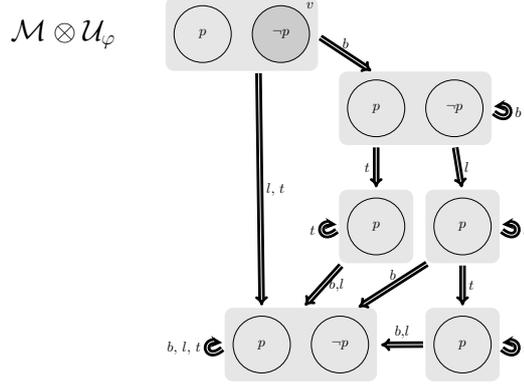

\begin{figure}[h!]
    \centering
     \resizebox{9cm}{!}{%
\begin{tikzpicture}
    \begin{scope}[xshift=-6cm,>=stealth, yshift=3cm]
    \node[m,fill=gray!40,label=45:{$v$}] (0) {$\neg p$};
    \node[m] (0a) [left =.5cm  of 0] {$p$};
    \node (g) [below = 1cm  of 0] {};
    \node [left=9.5cm of 3b] {\textbf{\Huge{$(\calM \otimes \calU_\varphi) \otimes \calU_\varphi$}}};
    \node (n) [below = 1cm  of 0] {};
    \node[m] (2a) [right = 1.5cm  of n] {$p$};
    \node[m] (2b) [right = .5cm  of 2a] {$\neg p$};
    \node[m] (3a) [below = 1.5cm  of 2a] {$p$};
    \node[m] (3b) [right = .7cm  of 3a] {$p$};
    \node[m] (4a) [below = 1.5cm  of 3b] {$p$};
    \node[m] (6a) [left = 3.5cm  of 4a]  {$p$};
    \node[m] (6b) [right = .5cm  of 6a]  {$\neg p$};
    
    \node (r) [below = 3cm  of 4a] {};

    \node[m] (0r) [right =3.5cm  of r]{$\neg p$};
    \node[m] (0ar) [left =.5cm  of 0r] {$p$};
    \node (gr) [below = 1cm  of 0r] {};
    \node (nr) [below = 1cm  of 0r] {};
    \node[m] (2ar) [right = 1.5cm  of nr] {$p$};
    \node[m] (2br) [right = .5cm  of 2ar] {$\neg p$};
    \node[m] (3ar) [below = 1.5cm  of 2ar] {$p$};
    \node[m] (3br) [right = .7cm  of 3ar] {$p$};
    \node[m] (4ar) [below = 1.5cm  of 3br] {$p$};
    \node[m] (6ar) [left = 3.5cm  of 4ar]  {$p$};
    \node[m] (6br) [right = .5cm  of 6ar]  {$\neg p$};

    \node (q) [below = 3cm  of 6br] {};

    \node (0q) [below =.5cm  of q]{};
    \node[m] (0aq) [left =.5cm  of 0q] {$p$};
    \node (gq) [below = 1cm  of 0q] {};
    \node (nq) [below = 1cm  of 0q] {};
    \node[m] (2aq) [right = 1.5cm  of nq] {$p$};
    \node (2bq) [right = .5cm  of 2aq] {};
    \node[m] (3aq) [below = 1.5cm  of 2aq] {$p$};
    \node[m] (3bq) [right = .7cm  of 3aq] {$p$};
    \node[m] (4aq) [below = 1.5cm  of 3bq] {$p$};
    \node[m] (6aq) [left = 3.5cm  of 4aq]  {$p$};
    \node (6bq) [right = .5cm  of 6aq] {};

    \node (s) [left = 12cm  of 0q] {};

    \node (0s) [left =.5cm  of s]{};
    \node[m] (0as) [left =.5cm  of 0s] {$p$};
    \node (gs) [below = 1cm  of 0s] {};
    \node (ns) [below = 1cm  of 0s] {};
    \node[m] (2as) [right = 1.5cm  of ns] {$p$};
    \node (2bs) [right = .5cm  of 2as] {};
    \node[m] (3as) [below = 1.5cm  of 2as] {$p$};
    \node[m] (3bs) [right = .7cm  of 3as] {$p$};
    \node[m] (4as) [below = 1.5cm  of 3bs] {$p$};
    \node[m] (6as) [left = 3.5cm  of 4as]  {$p$};
    \node (6bs) [right = .5cm  of 6as]  {};

    \node (p) [below = 3cm  of 6bq] {};

    \node (0p) [below =.5cm  of p]{};
    \node[m] (0ap) [left =.5cm  of 0p] {$p$};
    \node (gp) [below = 1cm  of 0p] {};
    \node (np) [below = 1cm  of 0p] {};
    \node[m] (2ap) [right = 1.5cm  of np] {$p$};
    \node (2bp) [right = .5cm  of 2ap] {};
    \node[m] (3ap) [below = 1.5cm  of 2ap] {$p$};
    \node[m] (3bp) [right = .7cm  of 3ap] {$p$};
    \node[m] (4ap) [below = 1.5cm  of 3bp] {$p$};
    \node[m] (6ap) [left = 3.5cm  of 4ap]  {$p$};
    \node (6bp) [right = .5cm  of 6ap]  {};

    \node (f) [left = 24cm  of 0p] {};

    \node[m] (0f) [left =.5cm  of f]{$\neg p$};
    \node[m] (0af) [left =.5cm  of 0f] {$p$};
    \node (gf) [below = 1cm  of 0f] {};
    \node (nf) [below = 1cm  of 0f] {};
    \node[m] (2af) [right = 1.5cm  of nf] {$p$};
    \node[m] (2bf) [right = .5cm  of 2af] {$\neg p$};
    \node[m] (3af) [below = 1.5cm  of 2af] {$p$};
    \node[m] (3bf) [right = .7cm  of 3af] {$p$};
    \node[m] (4af) [below = 1.5cm  of 3bf] {$p$};
    \node[m] (6af) [left = 3.5cm  of 4af]  {$p$};
    \node[m] (6bf) [right = .5cm  of 6af]  {$\neg p$};
    
    \node (g1) [right = 0.01cm  of 0f] {};
    \node (g2) [above = 0.01cm  of 2af] {};
    \node (g3) [below = 0.01cm  of 2af] {};
    \node (g4) [above = 0.01cm  of 3af] {};
    \node (g5) [below = 0.01cm  of 2bf] {};
    \node (g6) [above = 0.01cm  of 3bf] {};
    \node (g7) [below = 0.01cm  of 3bf] {};
    \node (g8) [above = 0.01cm  of 4af] {};
    \node (g9a) [below = 0.01cm  of 3af] {}; 
    \node (g9b) [left = 0.5cm  of g9a] {};
    \node (g10a) [above = 0.01cm  of 6af] {}; 
    \node (g10b) [right = .7cm  of g10a] {};
    \node (g11a) [below = 0.01cm  of 3bf] {}; 
    \node (g11b) [left = 0.5cm  of g11a] {};
    \node (g12) [right = 2cm  of g10a] {};
    \node (g13) [left = 0.01cm  of 4af] {}; 
    \node (g14a) [right = 0.3cm  of g12] {}; 
    \node (g14b) [below = 0.6cm  of g14a] {};
    \node (g15a) [below = 0.01cm  of 0f] {};
    \node (g15b) [left = 0.3cm  of g15a] {};
    \node (g16a) [right = 1cm  of g10a] {};
    \node (f1) [right = 0.01cm  of 2bf] {};
    \node (f2) [right = 0.01cm  of 3bf] {};
    \node (f3) [right = 0.01cm  of 4af] {};
    \node (f4) [left = 0.01cm  of 3af] {};
    \node (f5) [left = 0.01cm  of 6af] {};

\node (D0) [below = 0.01cm  of 6b] {};
\node (D1) [above = 0.01cm  of 0ar] {};
\draw[->,dotted,line width=2pt] (D0) -- (D1) node[midway,above] {$b$};
\node (D2) [below = 0.01cm  of 6a] {};
\node (D3) [above = 0.01cm  of 0af] {};
\draw[->,dotted,line width=2pt] (D2) -- (D3) node[midway,left] {$l,t$};
\node (D4) [below = 0.01cm  of 6ar] {};
\node (D5) [above = 0.01cm  of 0aq] {};
\draw[->,dotted,line width=2pt] (D4) -- (D5) node[midway,right] {$l$};
\node (D6) [left = 0.1cm  of D4] {};
\node (D7) [above = 4.1cm  of 3bs] {};
\draw[->,dotted,line width=2pt] (D6) -- (D7) node[midway,below] {$t$};
\node (D8) [below = 0.01cm  of 6aq] {};
\node (D9) [above = 0.01cm  of 0ap] {};
\draw[->,dotted,line width=2pt] (D8) -- (D9) node[midway,right] {$l$};
\node (D10) [left = 0.1cm  of D8] {};
\node (D11) [right =1cm  of 3bf] {};
\draw[->,dotted,line width=2pt] (D10) -- (D11) node[midway,right] {$b$};
\node (D12) [below = 0.1cm  of 6as] {};
\node (D13) [right =1cm  of 2bf] {};
\draw[->,dotted,line width=2pt] (D12) -- (D13) node[midway,right] {$b,l$};
\node (D14) [left = 6.1cm  of 3bp] {};
\node (D15) [below =.1cm  of D11] {};
\draw[->,dotted,line width=2pt] (D14) -- (D15) node[midway,above] {$b,l$};

\node (L1) [right=.1cm  of 3br] {};
\node (L2) [right=.1cm  of 2br] {};
\path[->] (L1) edge[->, bend right=85, dotted,line width=2pt] node[right] {$b$} (L2);
\node (L3) [right=.1cm  of 3bs] {};
\node (L4) [right=1.4cm  of 2bs] {};
\path[->] (L3) edge[->, bend right=85, dotted,line width=2pt] node[right] {$t$} (L4);
\node (L5) [right=.1cm  of 3bq] {};
\node (L6) [right=1.4cm  of 2bq] {};
\path[->] (L5) edge[->, bend right=85, dotted,line width=2pt] node[right] {$l$} (L6);
\node (L7) [right=.1cm  of 3bp] {};
\node (L8) [right=1.4cm  of 2bp] {};
\path[->] (L7) edge[->, bend right=85, dotted,line width=2pt] node[right] {$t$} (L8);

    \draw[-{Implies},double,line width=2pt] (g1) -- (g2) node[midway,above] {$b$};
    \draw[-{Implies},double,line width=2pt] (g3) -- (g4) node[midway,left] {$t$};
    \draw[-{Implies},double,line width=2pt] (g5) -- (g6) node[midway,right] {$l$};
    \draw[-{Implies},double,line width=2pt] (g7) -- (g8) node[midway,right] {$t$};
    \draw[-{Implies},double,line width=2pt] (g9b) -- (g10b) node[midway,right] {$b$,$l$};
    \draw[-{Implies},double,line width=2pt] (g11b) -- (g12) node[midway,above] {$b$};
    \draw[-{Implies},double,line width=2pt] (g13) -- (g14b) node[midway,above] {$b$,$l$};
   \draw[-{Implies},double,line width=2pt] (g15b) -- (g10a) node[midway,right] {$l$, $t$};
    
    \path[->] (f1) edge[-{Implies},double,line width=2pt,reflexive right] node[right] {$b$} (f1);
    \path[->] (f2) edge[-{Implies},double,line width=2pt,reflexive right] node[right] {$l$} (f2);
    \path[->] (f3) edge[-{Implies},double,line width=2pt,reflexive right] node[right] {$t$} (f3);
    \path[->] (f4) edge[-{Implies},double,line width=2pt,reflexive left] node[left] {$t$} (f4);
    \path[->] (f5) edge[-{Implies},double,line width=2pt,reflexive left] node[left] {$b$, $l$, $t$} (f5);

    \end{scope}
\begin{pgfonlayer}{background}
\filldraw [line width=4mm,line join=round,black!10]
      (0f.north  -| 0f.east)  rectangle (0f.south  -| 0af.west);
 \filldraw [line width=4mm,line join=round,black!10]
      (2af.north  -| 2bf.east)  rectangle (2af.south  -| 2af.west);
    \filldraw [line width=4mm,line join=round,black!10]
      (3af.north  -| 3af.east)  rectangle (3af.south  -| 3af.west);
       \filldraw [line width=4mm,line join=round,black!10]
      (3bf.north  -| 3bf.east)  rectangle (3bf.south  -| 3bf.west);
       \filldraw [line width=4mm,line join=round,black!10]
      (4af.north  -| 4af.east)  rectangle (4af.south  -| 4af.west);
      \filldraw [line width=4mm,line join=round,black!10]
      (6af.north  -| 6bf.east)  rectangle (6af.south  -| 6af.west);

      \filldraw [line width=4mm,line join=round,black!10]
      (0a.north  -| 2b.east)  rectangle (6a.south  -| 0a.west);
      \filldraw [line width=4mm,line join=round,black!10]
      (0ar.north  -| 2br.east)  rectangle (6ar.south  -| 0ar.west);
      \filldraw [line width=4mm,line join=round,black!10]
      (0aq.north  -| 3bq.east)  rectangle (6aq.south  -| 0aq.west);
      \filldraw [line width=4mm,line join=round,black!10]
      (0as.north  -| 3bs.east)  rectangle (6as.south  -| 0as.west);
      \filldraw [line width=4mm,line join=round,black!10]
      (0ap.north  -| 3bp.east)  rectangle (6ap.south  -| 0ap.west);
  \end{pgfonlayer}
\end{tikzpicture}
}
	\caption{Sketch of the result of the iterated product update operation $(\calM \otimes \calU_\varphi) \otimes \calU_\varphi $.
	Grey rectangles are a compact way to represent sub-models originating from worlds in the previous model.
Dotted arrows are meant to capture the direction of the arrows and the agents involved.
Unlike double arrows, they do not represent arrows from and to all worlds in the rectangle, but only from and to some of these worlds.
We leave to the reader their identification.
Finally, note that the sink is a copy of the previous model.}
    \label{fig:blt_action_twice_update_model}
\end{figure}
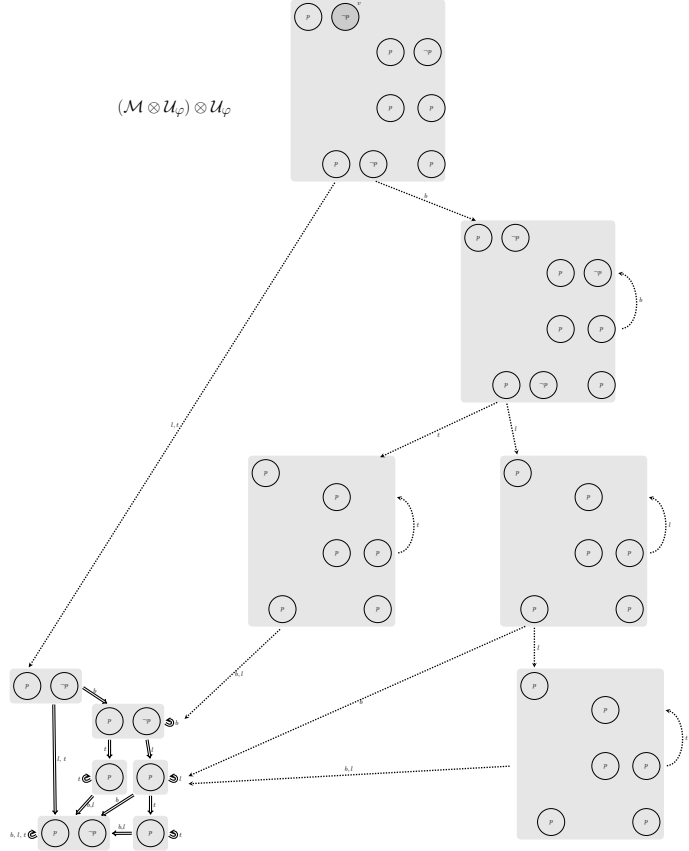

Let us now prove that $\calU_\phi$ performs update synthesis for any DBI normal $\phi$, minimizes belief change of other agents, and preserves consistency whenever possible. 

We will  use the auxiliary fact that updates do not affect propositional formulas:
\begin{lemma}[Propositional invariance]
\label{lem:prop_upd_stay}
Let $(w,\alpha)$ belong to the domain of $\calM\odot\calU$ for some Kripke model $\calM$ and action model $\calU$. Then for any  propositional formula~$\xi$,  
\begin{equation}
\label{eq:prop_stay_same}
    \calM,w \vDash \xi \qquad\Longleftrightarrow\qquad \calM\odot\calU, (w,\alpha) \vDash \xi.
\end{equation}
\end{lemma}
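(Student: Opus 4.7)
The plan is to proceed by straightforward structural induction on the propositional formula $\xi$, relying on the fact that the valuation $V^\calU$ in the pointed update depends only on the first (world) coordinate of each pair in $S^\calU$.

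For the base case $\xi = p \in \prop$, the hypothesis gives $(w,\alpha) \in S^\calU$, so the semantic clause for atoms applies on both sides. By the definition of $V^\calU$ in \cref{def:pointed_update}, we have $V^\calU(p) = \{(v,\beta) \in S^\calU \mid v \in V(p)\}$, so $\calM \odot \calU, (w,\alpha) \vDash p$ iff $(w,\alpha) \in V^\calU(p)$ iff $w \in V(p)$ iff $\calM, w \vDash p$. This gives the atomic equivalence directly.

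For the inductive steps, the cases $\xi = \lnot \phi$ and $\xi = \phi \land \psi$ reduce immediately to the induction hypothesis by the standard clauses for Boolean connectives: $\calM \odot \calU, (w,\alpha) \vDash \lnot \phi$ iff $\calM \odot \calU, (w,\alpha) \nvDash \phi$ iff (by IH) $\calM, w \nvDash \phi$ iff $\calM, w \vDash \lnot \phi$, and analogously for conjunction.

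There is no genuine obstacle here; the whole argument is a routine unfolding of definitions. The only point worth noting is that the hypothesis $(w,\alpha) \in S^\calU$ is necessary so that the atomic clause actually applies in $\calM \odot \calU$. As an alternative, one could derive the lemma at once from \cref{th:sameupdates} (product and pointed updates are bisimilar) combined with the analogous, well-known fact for the product update $\calM \otimes \calU$, in which $V'(p) = \{(v,\beta) \in S' \mid v \in V(p)\}$ by construction; but the direct induction is arguably shorter and self-contained.
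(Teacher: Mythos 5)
Your proof is correct, but it takes a more elementary route than the paper. The paper's own proof is a two-line reduction: it cites the well-known propositional invariance for product updates $\otimes$ and then transfers it to pointed updates via \cref{th:sameupdates} --- precisely the ``alternative'' you mention in your last sentence. Your direct structural induction on $\xi$, using only the clause $V^\calU(p) = \{(v,\beta)\in S^\calU \mid v \in V(p)\}$ from \cref{def:pointed_update}, is self-contained and has one small advantage: it applies verbatim to \emph{any} pair $(w,\alpha)$ in the domain $S^\calU$, exactly as the lemma is stated, whereas the bisimilarity in \cref{th:sameupdates} is formulated for the designated point and one must additionally observe that the identity bisimulation constructed there relates every element of $S^\calU$ to its copy in $S'$ before the transfer covers arbitrary domain elements. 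The paper's route is shorter and reuses existing machinery; yours is longer but makes no appeal to external facts and surfaces the one hypothesis that actually matters, namely that $(w,\alpha)\in S^\calU$ so the atomic clause applies. Both are valid.
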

\begin{proof}
For product updates $\otimes$, this is well known (see, e.g.,~\cite{ditmarsch2007dynamic}). For pointed updates, it follows from Theorem~\ref{th:sameupdates}.
\end{proof}

\begin{theorem}[Update synthesis success] \label{thm:update_synth_succ}
For any DBI normal formula $\phi$ and any pointed Kripke model $(\calM,v)$, the pointed update of $(\calM,v)$ with $(\calU_\phi,0)$ is defined and 
\begin{equation}
\label{eq:synthesis}
\calM\odot\calU_\phi, (v,0) \vDash \phi.
\end{equation}
\end{theorem}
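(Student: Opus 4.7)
My plan is to prove the theorem by structural induction on the DBI normal formula $\phi$, following the four clauses of \cref{def:synth_am1}. Definedness of $\calM\odot\calU_\phi$ is immediate because $\pre^\phi(0)=\top$, so the substance is to verify (\ref{eq:synthesis}).

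For the base case $\phi=B_i\xi$, the construction makes $m$ the unique $i$-successor of $0$ in $\calU_\phi$, so every $i$-successor of $(v,0)$ in $\calM\odot\calU_\phi$ is of the form $(u,m)$ with $v R_i u$ and $\calM,u\vDash\xi$. \Cref{lem:prop_upd_stay} gives $\calM\odot\calU_\phi,(u,m)\vDash\xi$, hence $\calM\odot\calU_\phi,(v,0)\vDash B_i\xi$.

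For the inductive cases $\phi=B_i\psi$ and $\phi=B_i(\xi\wedge\psi)$, the key structural observation is that $\calU_\phi$ is obtained from $\calU_\psi$ by adjoining a fresh event $m$ whose outgoing $j$-edges, for every $j\neq i$, mirror the outgoing $j$-edges of $0$ in $\calU_\psi$, while the $j$-edges inside $D^\psi\cup\{-1\}$ are inherited verbatim from $\underQ_j^\psi$. Since $0$ has no incoming arrows in either action model, from any $(u',\beta)$ with $\beta\in D^\psi\cup\{-1\}$ the submodels of $\calM\odot\calU_\phi$ and $\calM\odot\calU_\psi$ reachable under any relation coincide, and the identity on $D^\psi\cup\{-1\}$ extended by $m\mapsto 0$ yields an $(\agents\setminus\{i\})$-bisimulation between $(\calM\odot\calU_\phi,(u,m))$ and $(\calM\odot\calU_\psi,(u,0))$ for every $i$-successor $u$ of $v$. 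Because $i\notin\ta(\psi)$, \cref{lem:Gbisimgoal} transfers the induction hypothesis $\calM\odot\calU_\psi,(u,0)\vDash\psi$ to $\calM\odot\calU_\phi,(u,m)\vDash\psi$. In case 3 the extra conjunct $\xi$ follows from \cref{lem:prop_upd_stay} since $\pre^\phi(m)=\xi$. Taking all $i$-successors $(u,m)$ of $(v,0)$, we obtain $\calM\odot\calU_\phi,(v,0)\vDash B_i\psi$ or $B_i(\xi\wedge\psi)$ respectively.

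For the conjunctive case $\phi=\psi\wedge\theta$ with $\ta(\psi)\cap\ta(\theta)=\varnothing$, I first establish an auxiliary side claim by a parallel induction on the construction: for every $a\in\ta(\chi)$ all $a$-successors of $0$ in $\calU_\chi$ lie in $D^\chi$, and for every $a\notin\ta(\chi)$ the unique $a$-successor of $0$ in $\calU_\chi$ is $-1$. Granted this claim, clause 4 of \cref{def:synth_am1} implies that for $a\in\ta(\psi)$ the $a$-successors of $0$ in $\calU_\phi$ are exactly those in $\calU_\psi$ (the $\calU_\theta$-contribution collapses onto $-1$ and is therefore absorbed), and again the sub-structure on $D^\psi\cup\{-1\}$ matches $\underQ^\psi$. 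The reachability argument from the previous paragraph then supplies a $\ta(\psi)$-bisimulation between $(\calM\odot\calU_\psi,(v,0))$ and $(\calM\odot\calU_\phi,(v,0))$, and \cref{lem:Gbisimgoal} plus the induction hypothesis yields $\calM\odot\calU_\phi,(v,0)\vDash\psi$; the case of $\theta$ is symmetric.

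The main obstacle is the auxiliary side claim used in clause 4: it requires careful bookkeeping of how the root edges $Q^\chi_a\cap(\{0\}\times E^\chi)$ evolve through the recursion, in particular ensuring that the fallback edge $(0,-1)$ is introduced in clause 4 exactly when no "true" $a$-successor survives from either conjunct. Once this lemma is in hand, the bisimulations are straightforward identity-plus-root-relabeling maps and the verification of Forth/Back reduces to rereading the clause-by-clause definitions.
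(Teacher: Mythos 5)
Your proposal is correct and follows essentially the same route as the paper's proof: structural induction on the DBI normal form, with \cref{lem:prop_upd_stay} handling the base case and propositional conjuncts, and $G$-bisimilarity of the action models lifted to the updated models plus \cref{lem:Gbisimgoal} handling the inductive cases. The only difference is that you make explicit (via the identity-plus-root-relabeling bisimulations and the auxiliary claim about which root edges land in $D^\chi$ versus $-1$) what the paper asserts ``by construction,'' which is a welcome, and correct, filling-in of detail rather than a different argument.
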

\begin{proof}
Let $\calM=\langle S, R, V\rangle$ with $v \in S$. Per Def.~\ref{def:synth_am1}, $\calU_\phi = \langle E^\phi,Q^\phi, \pre^\phi\rangle$.
The pointed update $\calM\odot\calU_\phi, (v,0)$ is defined because $\pre^\phi(0)=\top$ always holds.  Let \mbox{$\calM\odot\calU_\phi = \langle W^\phi,R^\phi,V^\phi\rangle$}. We prove~\eqref{eq:synthesis} by induction on the construction of $\phi$. For $\phi = B_i \xi$ with propositional $\xi$, whenever $(v,0)R_i^\phi(u,m)$, then $\calM,u\vDash \xi$ because $\pre^\phi(m) =\xi$. Hence, $\calM\odot\calU_\phi, (u,m) \vDash \xi$ by Lemma~\ref{lem:prop_upd_stay}. We conclude \eqref{eq:synthesis} for $\phi= B_i \xi$ since $(u,m)$ was chosen arbitrarily. 

For $\phi = B_i \psi$, it follows by construction that $\calU_\phi,m \leftrightarroweq_{\agents\setminus\{i\}} \calU_\psi,0$.   Whenever $v R_i u$,
since bisimilarity is reflexive, $\calM\odot\calU_\phi, (u,m) \leftrightarroweq_{\agents \setminus \{i\}} \calM\odot\calU_\psi, (u,0)$ by Theorem~\ref{th:sameupdates}. By IH, $\calM \odot \calU_\psi, (u,0) \vDash \psi$. By normality of $\phi$, $i \notin\ta(\psi)$. Hence, $\calM \odot \calU_\phi, (u,m) \vDash \psi$  by Lemma~\ref{lem:Gbisimgoal}. Now \eqref{eq:synthesis} for $\phi=B_i \psi$ follows since  $(v,0)R_i^\phi(u,\alpha)$ if{f} $vR_iu$ and $\alpha = m$ due to $\pre^\phi(m) = \top$. The case of $\phi = B_i(\xi \land \psi)$ is analogous, except additionally $\calM\odot\calU_\phi, (u,m) \vDash \xi$ because $\pre^\phi(m)=\xi$, as in the case of $B_i\xi$.

For $\phi = \psi_1\land \psi_2$, by construction and normality of $\phi$ we have $\calU_\phi,0 \leftrightarroweq_{\ta(\psi_j)} \calU_{\psi_j},0$ and, hence, $\calM\odot\calU_\phi,(v,0) \leftrightarroweq_{\ta(\psi_j)} \calM\odot\calU_{\psi_j},(v,0)$   for $j=1,2$.
$\calM \odot \calU_{\psi_j}, (v,0) \vDash \psi_j$ for $j=1,2$ by IH. Hence, $\calM \odot \calU_\phi, (v,0) \vDash \psi_j$ for $j=1,2$ by Lemma~\ref{lem:Gbisimgoal}. Now \eqref{eq:synthesis} for $\phi=\psi \land \theta$ follows immediately. 
\end{proof}

To formulate statements about minimal change, we need the following definition:
\begin{definition}[Independent formulas]
Let  $\phi$  be a DBI normal formula  and its corresponding action model $\calU_\phi=\langle E^\phi, Q^\phi,\pre^\phi\rangle$  be constructed according to Def.~\ref{def:synth_am1}. A modal formula $\theta$ is in \emph{$\top$-shape w.r.t.~event $\alpha_0 \in E^\phi$} if{f} for each sequence of modal operators $B_{i_1}\dots B_{i_k}$ used in the construction of $\theta$, including the empty sequence $\varepsilon$ with $k=0$, and for the unique corresponding sequence  $\alpha_0 Q^\phi_{i_1}\alpha_1Q^\phi_{i_2}\alpha_2\dots\alpha_{k-1}Q^\phi_{i_k}\alpha_k$ of events $\alpha_j \in E^\phi$ all $\pre^\phi(\alpha_j)=\top$ for  $0\leq j\leq k$. Formula $\theta$ is 
called \emph{independent} of~$\phi$ if{f} it is in $\top$-shape w.r.t.~$0\in E^\phi$. If $\theta$ is not independent of $\phi$, it is \emph{dependent} on $\phi$.
\end{definition}

\begin{example}
For formula  $\phi=B_b\bigl(B_t p \land B_l(p \land B_tp)\bigr)$ from Example~\ref{Ex:BLT}, we have that $\theta =  \lnot B_b B_b p \lor B_t B_l  p$ is independent from $\phi$ because the modality sequences 
\[
\varepsilon, B_b, B_bB_b, B_t, \text{ and }B_tB_l
\]  
correspond to event sequences 
\[
0, 0Q^\phi_b4, 0Q^\phi_b4Q^\phi_b4, 0Q^\phi_t\sink, \text{ and } 0Q^\phi_t\sink Q^\phi_l\sink
\] 
and preconditions for events $0$, $4$, and $\sink$ are all $\top$.  On the other hand, \mbox{$\eta =  \lnot B_b B_t p$} is dependent on $\phi$ because modality sequence $B_bB_t$ corresponds to event sequence~$0Q^\phi_b4Q^\phi_t3$, and $\pre^\phi(3) =p\ne \top$. Both $\theta$ and $\eta$ were true in $\calM,v$ from Fig.~\ref{fig:LT_a}, left. While the pointed update with $(\calU_\phi,0)$ keeps $\theta$ true, formula $\eta$ becomes false  in $\calM \odot \calU_\phi, (v,0)$ (see Fig.~\ref{fig:blt_action_model}, right).
\end{example}

\begin{theorem}[Update synthesis minimality]
\label{thm:minimalityr}
If formula $\theta$ is independent of a DBI~normal formula $\phi$, then for any pointed Kripke model $(\calM,v)$,
\begin{equation}
\label{eq:goalind}
\calM,v \vDash \theta
\qquad \Longleftrightarrow \qquad
\calM\odot\calU_\phi, (v,0) \vDash \theta.
\end{equation}
\end{theorem}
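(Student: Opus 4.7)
The plan is to strengthen the statement so the induction on $\theta$ carries through event-by-event, and then specialise. Concretely, I would prove: for every $\alpha \in E^\phi$ with $\pre^\phi(\alpha) = \top$ and every formula $\theta$ in $\top$-shape w.r.t.~$\alpha$, and every $v \in S$ with $(v,\alpha) \in S^\calU$,
\[
\calM, v \vDash \theta \qquad\Longleftrightarrow\qquad \calM \odot \calU_\phi, (v,\alpha) \vDash \theta.
\]
The theorem then follows by taking $\alpha = 0$: independence of $\theta$ from $\phi$ is by definition $\top$-shape w.r.t.~$0$, and $\pre^\phi(0) = \top$ with $(v,0) \in S^\calU$ by construction.

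Before the induction, I would establish a preliminary structural lemma about $\calU_\phi$: every event $\alpha$ with $\pre^\phi(\alpha)=\top$ has exactly one outgoing $Q^\phi_i$-successor for each agent $i$. This is proved by induction on the construction in Definition~\ref{def:synth_am1}. The event $-1$ only self-loops; the event $0$ gets a unique edge per agent in each clause, and in case~4 the DBI normality condition $\ta(\psi) \cap \ta(\theta) = \varnothing$ is exactly what prevents competing edges from $0$ via the same agent emerging from both conjuncts; and the only other events with $\pre=\top$ introduced along the way are the middle events $m$ of case~2, whose successors inherit uniqueness from those of $0$ in the sub-model $\calU_\psi$. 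This lemma is what justifies the ``unique corresponding sequence'' in the definition of $\top$-shape.

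The induction on $\theta$ then proceeds cleanly. The base case $\theta = p$ holds on any $(v,\alpha) \in S^\calU$ by the definition $V^\calU(p) = \{(v,\beta) \in S^\calU \mid v \in V(p)\}$. For Boolean cases, the modality sequences appearing in a subformula are among those appearing in $\theta$, so $\top$-shape descends to subformulas, and the IH combines directly. For the modal case $\theta = B_i \theta'$: the $\top$-shape condition applied to the singleton sequence $B_i$, together with the preliminary lemma, gives a unique $\beta$ with $\alpha Q^\phi_i \beta$, and furthermore $\pre^\phi(\beta) = \top$ and $\theta'$ is in $\top$-shape w.r.t.~$\beta$ (by prepending $B_i$ to every modality sequence in $\theta'$). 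For $(\Rightarrow)$, any $(u,\beta') \in S^\calU$ with $(v,\alpha) R^\calU_i (u,\beta')$ forces $\beta' = \beta$ and $v R_i u$, so IH applied to $(u,\beta)$ and $\theta'$ suffices. For $(\Leftarrow)$, given $u$ with $v R_i u$, we have $(u,\beta) \in T^\calU$ because $\calM,u \vDash \top = \pre^\phi(\beta)$, and $(u,\beta) \in S^\calU$ by closure of $S^\calU$ under accessibility from $(v,\alpha)$, yielding $(v,\alpha) R^\calU_i (u,\beta)$; then IH again closes the argument.

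The main obstacle I expect is the preliminary uniqueness lemma, which requires a careful case-by-case tracing of Definition~\ref{def:synth_am1} and pinpoints exactly where DBI normality is used. Once uniqueness is in hand, the formula induction is essentially bookkeeping that mirrors the definition of $\top$-shape.
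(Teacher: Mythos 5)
Your proposal is correct and follows essentially the same route as the paper: the same strengthened induction statement over all events $\alpha$ with $\theta$ in $\top$-shape w.r.t.\ $\alpha$, specialised to $\alpha=0$ at the end, with the same treatment of the atomic, Boolean, and modal cases. The only difference is that you make explicit (and verify) the uniqueness of the $Q^\phi_i$-successor of each $\top$-precondition event, which the paper simply asserts when it speaks of ``the unique event such that $\alpha Q^\phi_i \beta$''.
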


\begin{proof}
Let $\calM=\langle S, R, V\rangle$.
We prove by induction on the construction of $\theta$ that, for any $\alpha \in E^\phi$ and any $u \in S$, if $\theta$  is in $\top$-shape w.r.t.~$\alpha$, then $\calM, u \vDash \theta$ if{f} $\calM\odot\calU_\phi, (u,\alpha) \vDash \theta$. \eqref{eq:goalind} is an instance of this induction statement for $\alpha=0$ and $u=v$. Note that if $\theta$~is in $\top$-shape w.r.t.~$\alpha$, then $\pre^\phi(\alpha) = \top$ because of the empty sequence~$\varepsilon$, hence, the pointed update of $(\calM,u)$ with $(\calU_\phi,\alpha)$ is defined for all $u \in S$. 
For propositional atoms, the statement follows from the definition of pointed updates. The cases for Boolean connectives are straightforward. It remains to show  the induction statement for  $\theta = B_i \eta$. Let $\beta \in E^\phi$ be the unique event such that $\alpha Q_i^\phi\beta$. Then $\pre^\phi(\beta) = \top$ because of sequence $B_i$ of $\theta$ and, additionally, $\eta$ is in $\top$-shape w.r.t.~$\beta$. By IH, for any $u \in S$, we have $\calM,u\vDash \eta$ if{f} $\calM\odot\calU_\phi, (u,\beta) \vDash \eta$. We have 
$\calM, u \nvDash B_i \eta$ if{f} $\calM, w \nvDash\eta$ for some $uR_iw$. By IH, this is equivalent to $\calM\odot\calU_\phi, (w,\beta) \nvDash\eta$, which is equivalent to $\calM\odot\calU_\phi, (u,\alpha) \nvDash B_i\eta$ because $(u,\alpha)R^\phi_i(w,\gamma)$ if{f} $u R_i w$ and $\gamma=\beta$.
\end{proof}

\begin{corollary}
\label{cor:minim}
Let $\phi = \bigwedge_{j\in G}B_j\omega_j$ be a DBI normal formula, where $\varnothing \ne G \subseteq \agents$.
\begin{compactenum}
\item
If $i \notin \ta(\phi)$, i.e., if $ i\notin G$, then $i$'s beliefs are unaffected by pointed update $\calU_\phi$, i.e., $\calM,v \vDash B_i
\sigma$ if{f}
$\calM\odot\calU_\phi, (v,0) \vDash B_i \sigma$ for any formula $\sigma$ and any pointed Kripke model $(\calM,v)$.
\item If $i\in \ta(\phi)$, but $\omega_i = \bigwedge_{j\in H}B_j\pi_j$ has no propositional component, then $i$'s propositional beliefs are unaffected by $\calU_\phi$, i.e., $\calM,v \vDash B_i
\chi$ if{f}
$\calM\odot\calU_\phi, (v,0) \vDash B_i \chi$ for any propositional formula $\chi$ and any pointed Kripke model $(\calM,v)$.
\end{compactenum}
\end{corollary}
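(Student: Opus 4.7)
The plan is to reduce both parts to Theorem~\ref{thm:minimalityr} by showing that the relevant formula is in $\top$-shape with respect to event~$0$ of~$\calU_\phi$, hence independent of~$\phi$. The structural work lies entirely in reading off the appropriate arrows and preconditions from Definition~\ref{def:synth_am1}.

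For Part~1, I would first prove by induction on the construction of $\phi$ that, for $i \notin G$, the only outgoing $i$-arrow from event~$0$ in $Q^\phi_i$ is $(0,-1)$. The base cases (steps~1, 2, and~3 of Def.~\ref{def:synth_am1}) directly yield $(0,-1) \in Q^{\cdot}_j$ for every $j \ne i_0$, where $i_0$ is the agent of the outer modality, and no other outgoing arrow from~$0$. The inductive case (step~4 for $\phi = \psi \land \theta$) exploits the DBI normal-form disjointness $\ta(\psi) \cap \ta(\theta) = \varnothing$: since $i$ is not a top-level target of either conjunct, the contributing set $\{(0,k) \mid (0,k) \in Q^\psi_i \cup Q^\theta_i,\ k \in D^\psi \sqcup D^\theta\}$ is empty, so the default clause supplies exactly $(0,-1)$. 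Because $-1$ is a sink with $\pre^\phi(-1)=\top$, every modality sequence of $B_i \sigma$ traces the path $0 \to -1 \to -1 \to \cdots$, whose events all carry precondition~$\top$. Thus $B_i \sigma$ is in $\top$-shape w.r.t.~$0$ for any~$\sigma$, and Theorem~\ref{thm:minimalityr} delivers the claimed equivalence.

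For Part~2, I would observe that $B_i \omega_i$ with $\omega_i = \bigwedge_{j \in H} B_j \pi_j$ is handled by step~2 of Definition~\ref{def:synth_am1}, not step~3, since $\omega_i$ is not of the form $\xi \land \psi$ with $\xi$ propositional. Consequently the fresh event~$m_i$ introduced for this subformula satisfies $\pre(m_i)=\top$. Tracing through the iterated top-level step~4 then preserves both the arrow $(0, m_i) \in Q^\phi_i$ and the precondition $\pre^\phi(m_i)=\top$. For a propositional~$\chi$, the only modality sequences of $B_i \chi$ are $\varepsilon$ and~$B_i$, tracing to events~$0$ and~$m_i$ respectively, both with precondition~$\top$. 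Hence $B_i \chi$ is in $\top$-shape w.r.t.~$0$, independent of~$\phi$, and Theorem~\ref{thm:minimalityr} again concludes.

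The main obstacle is the inductive bookkeeping in Part~1: one must verify at each nesting of step~4 that neither conjunct introduces a non-sink outgoing arrow from event~$0$ for an agent $i \notin G$ at any intermediate stage. The DBI normal-form disjointness condition $\ta(\psi) \cap \ta(\theta) = \varnothing$ is precisely what guarantees the determinism of~$Q^\phi_i$ at event~$0$ and stabilizes the sink-routing throughout the recursion; without it, determinism could break and the $\top$-shape argument would fail.
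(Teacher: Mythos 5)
Your proposal is correct and follows exactly the route the paper intends: the corollary is stated as an immediate consequence of Theorem~\ref{thm:minimalityr} (the paper omits the details), and your verification that $B_i\sigma$ (via the $(0,-1)$ sink-routing for $i\notin G$) and $B_i\chi$ (via $\pre^\phi(m_i)=\top$ from clause~2 of Def.~\ref{def:synth_am1}) are in $\top$-shape w.r.t.\ event~$0$, hence independent of~$\phi$, is precisely the missing bookkeeping. No gaps.
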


\begin{theorem}[Update synthesis consistency preservation]
\label{thm:minimality}
Let $\phi$ be a DBI normal formula.
Pointed update with $\calU_\phi$ can only cause (higher-order) inconsistent beliefs if the respective (higher-order) beliefs originally excluded the respective propositional preconditions: for any modality sequence  $B_{i_1}\dots B_{i_k}$ with $i_j\ne i_{j+1}$ for any $j$ and any pointed Kripke model $(\calM,v)$,
\begin{multline}
\label{eq:conserve}
\calM,v \vDash \hB_{i_1}\Bigl(\pre^\phi(\alpha_1) \land \hB_{i_2}\left(\pre^\phi(\alpha_2) \land \dots \hB_{i_k} \pre^\phi(\alpha_k)\right)\Bigr)
\qquad \Longleftrightarrow \qquad \\
\calM\odot\calU_\phi, (v,0) \nvDash B_{i_1}\dots B_{i_k} \bot
\end{multline}
for the unique sequence $0 Q^\phi_{i_1}\alpha_1Q^\phi_{i_2}\alpha_2\dots Q^\phi_{i_k}\alpha_k$ of events $\alpha_j \in E^\phi$.
\end{theorem}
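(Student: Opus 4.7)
The plan is to reduce both sides of the equivalence to the same existential statement about paths in $\calM$ whose intermediate worlds satisfy the preconditions $\pre^\phi(\alpha_j)$, and then conclude by matching.

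First, I would establish a structural \emph{determinism lemma} for $\calU_\phi$: for every event $\beta \in E^\phi$ and every agent $j \in \agents$, there exists exactly one $\gamma \in E^\phi$ such that $\beta Q^\phi_j \gamma$. The proof is by induction on the construction in Def.~\ref{def:synth_am1}. Each fresh event $m$ introduced in cases~1--3 has a self-loop $(m,m)$ for the ``believer'' agent $i$ and inherits, for $j \ne i$, the unique outgoing $j$-edge from the recursively built $\calU_\psi$ at its root~$0$; event~$0$ of $\calU_\phi$ itself has $(0,m)$ for $i$ and $(0,-1)$ for $j \ne i$. In case~4 the DBI-normal disjointness $\ta(\psi) \cap \ta(\theta) = \varnothing$ guarantees that, for each $j$, at most one of the two candidate $0$-edges from $\calU_\psi$ and $\calU_\theta$ lands in $D^\psi \sqcup D^\theta$, and the explicit fallback $\{(0,-1)\}$ fills in when none does. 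Together with the self-loops at $-1$ and the inheritance of $\underQ^\psi_j$ (whose source events already satisfy the property by IH), this proves determinism and justifies the ``unique sequence $\alpha_j$'' appearing in the statement.

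Next I would unfold the right-hand side. By the semantics of $B_{i_1}\dots B_{i_k}\bot$, the statement $\calM \odot \calU_\phi, (v,0) \nvDash B_{i_1}\dots B_{i_k}\bot$ is equivalent to the existence of a chain
\[
(v,0) \, R^\phi_{i_1} \, (u_1, \beta_1) \, R^\phi_{i_2} \, (u_2, \beta_2) \, \dots \, R^\phi_{i_k} \, (u_k, \beta_k)
\]
in $\calM \odot \calU_\phi$. By Def.~\ref{def:pointed_update} such a chain amounts to the existence of worlds $u_1,\dots,u_k$ in $\calM$ and events $\beta_1,\dots,\beta_k \in E^\phi$ with $v R_{i_1} u_1 R_{i_2} \dots R_{i_k} u_k$ in $\calM$, $0 Q^\phi_{i_1} \beta_1 Q^\phi_{i_2} \dots Q^\phi_{i_k} \beta_k$ in $\calU_\phi$, and $\calM, u_j \vDash \pre^\phi(\beta_j)$ for every $j$; the closure defining $S^{\calU_\phi}$ automatically places all $(u_j,\beta_j)$ in the reachable domain by induction along the chain. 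Determinism then forces $\beta_j = \alpha_j$. The left-hand side unfolds directly from the semantics of $\hB$ and $\wedge$ to: there exist $u_1,\dots,u_k \in S$ with $v R_{i_1} u_1 R_{i_2} \dots R_{i_k} u_k$ in $\calM$ and $\calM, u_j \vDash \pre^\phi(\alpha_j)$ for every $j$. No further unfolding is needed since each $\pre^\phi(\alpha_j)$ is propositional (either $\top$ or a propositional conjunct $\xi$ of $\phi$). This matches the condition obtained from the right-hand side verbatim, closing the biconditional.

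The main obstacle I anticipate is the bookkeeping for the determinism lemma---especially for case~4 of Def.~\ref{def:synth_am1}---together with the sink behaviour: once the unique sequence hits $\alpha_j = -1$, we have $\alpha_{j'} = -1$ and $\pre^\phi(\alpha_{j'}) = \top$ for all later $j'$, so those preconditions trivialise and both sides collapse onto the existence of a shorter $\calM$-path whose tail is vacuously continuable via the self-loop at $-1$. The matching argument above still handles this uniformly, so no separate case analysis is needed.
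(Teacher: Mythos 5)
Your proposal is correct and follows essentially the same route as the paper: both arguments reduce the two sides of~\eqref{eq:conserve} to the existence of a chain $vR_{i_1}u_1\dots R_{i_k}u_k$ in $\calM$ with $\calM,u_j\vDash\pre^\phi(\alpha_j)$, lift it to the corresponding $R^\phi$-chain in $\calM\odot\calU_\phi$ by induction along the chain, and invoke the uniqueness of the event sequence $\alpha_1,\dots,\alpha_k$. Your explicit determinism lemma for $Q^\phi_j$ is a worthwhile elaboration of that uniqueness claim, which the paper's proof takes for granted from the theorem statement.
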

\begin{proof}
Let $\calM= \langle S, R, V\rangle$ and $\calM \odot \calU_\phi = \langle S^\phi, E^\phi, V^\phi\rangle$ for the pointed update of $(\calM,v)$ with $(\calU_\phi,0)$.
The left statement holds if{f}  there is a sequence  $vR_{i_1}s_1R_{i_2}s_2\dots R_{i_k}s_k$ of states from $S$ such that $\calM, s_j \vDash \pre^\phi(\alpha_j)$ for $j=1,\dots,k$. It is easy to observe (by induction on $k$) that this is equivalent to  $(s_j,\alpha_j) \in E^\phi$ for $j=1,\dots,k$ and 
$
(v,0)R^\phi_{i_1}(s_1,\alpha_1)R^\phi_{i_2}(s_2,\alpha_2)\dots R^\phi_{i_k}(s_k,\alpha_k)
$. The equivalence to the right statement now follows from the fact that  $\calM\odot\calU_\phi, (s_k,\alpha_k) \nvDash \bot$ and the uniqueness of the sequence of $\alpha_j$'s such that $0 Q^\phi_{i_1}\alpha_1Q^\phi_{i_2}\alpha_2\dots Q^\phi_{i_k}\alpha_k$. 
\end{proof}

\begin{corollary}
Let $\phi$ be a DBI normal formula. 
\begin{compactenum}
    \item
    If $\phi$ fits either of the clauses of Cor.~\ref{cor:minim} for agent~$i$, then $\calU_\phi$ preserves $i$'s~consistency, i.e., $\calM,v \nvDash B_i \bot$ if{f} $\calM\odot\calU_\phi, (v,0) \nvDash B_{i} \bot$ for any $(\calM,v)$.
\item
    If $\phi = B_i\left(\xi \land \bigwedge_{k\in H} B_k \pi_k\right) \land \bigwedge_{j\in G} B_j \omega_j$ where $i \notin G$ and $\xi$ is propositional, then $\calU_\phi$ makes  $i$'s beliefs inconsistent if and only if $i$ originally does not consider $\xi$ to be possible, i.e.,  $\calM,v \vDash  \hB_i  \xi$ if{f} $\calM\odot\calU_\phi, (v,0) \nvDash B_{i} \bot$ for any $(\calM,v)$.
\end{compactenum}
\end{corollary}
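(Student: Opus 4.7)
The plan is to apply Theorem~\ref{thm:minimality} with $k=1$ and the one-element modality sequence $B_i$. For this I need to identify the unique event $\alpha_1 \in E^\phi$ with $0\, Q^\phi_i\, \alpha_1$ and its precondition $\pre^\phi(\alpha_1)$, then read off the claim from \eqref{eq:conserve}.

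First I would establish a small preliminary observation by induction on Definition~\ref{def:synth_am1}: for every DBI normal $\phi$ and every agent $j$, the set $Q^\phi_j(0)$ is a singleton, equal to $\{-1\}$ when $j \notin \ta(\phi)$ and to $\{\alpha\}$ for some $\alpha \in D^\phi$ otherwise. Cases~1--3 are immediate from the explicit formulas. Case~4 (conjunction $\phi = \psi \land \theta$) uses DBI normality crucially: since $\ta(\psi) \cap \ta(\theta) = \varnothing$, at most one of $Q^\psi_j(0)$ and $Q^\theta_j(0)$ can contain an event in $D^\psi \sqcup D^\theta$, so the ``inherit an edge into $D^\psi \sqcup D^\theta$'' branch and the ``default edge to $-1$'' branch never compete. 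A by-product is that case~4 copies each agent's single outgoing edge from~$0$ verbatim from the relevant sub-action model.

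With this in hand, both parts follow quickly. For part~(1a), $i \notin \ta(\phi)$ forces $\alpha_1 = -1$, so $\pre^\phi(\alpha_1) = \top$. For part~(1b), the (unique) top-level conjunct targeting $i$ is $B_i \omega_i$; since $\omega_i$ has no propositional component, this conjunct is synthesized via case~2 (not case~3) of Definition~\ref{def:synth_am1}, which sets $\pre^\phi(\alpha_1) = \top$ on the freshly introduced event, and by the by-product above this edge survives into $\calU_\phi$ unchanged. In both sub-cases of~(1), \eqref{eq:conserve} specializes to $\calM,v \vDash \hB_i \top$ iff $\calM \odot \calU_\phi, (v,0) \nvDash B_i \bot$, and since $\hB_i \top \equiv \neg B_i \bot$ is a validity, the desired consistency preservation drops out. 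For part~(2), the distinguished conjunct $B_i\bigl(\xi \land \bigwedge_{k \in H} B_k \pi_k\bigr)$ is synthesized via case~3, which sets $\pre^\phi(\alpha_1) = \xi$; substituting this into \eqref{eq:conserve} gives the stated equivalence verbatim.

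The main obstacle is the preliminary uniqueness/identification step: one must carefully track, through the recursion, exactly which event each agent's single outgoing edge from~$0$ points to, and with which precondition, and verify that DBI normality prevents two top-level conjuncts from injecting competing edges for the same agent. Once that bookkeeping is done, both items are direct instantiations of Theorem~\ref{thm:minimality}.
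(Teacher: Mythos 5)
Your proposal is correct and follows exactly the route the paper intends: the corollary is stated without proof as an immediate instantiation of Theorem~\ref{thm:minimality} with $k=1$, where the unique event $\alpha_1$ with $0\,Q^\phi_i\,\alpha_1$ has precondition $\top$ in the cases of part~(1) (so that \eqref{eq:conserve} reduces to $\neg B_i\bot$ on both sides) and precondition $\xi$ in part~(2). Your preliminary induction showing that each agent has exactly one outgoing $Q^\phi_j$-edge from~$0$, with DBI normality preventing competing edges in the conjunction case, is precisely the bookkeeping the paper leaves implicit (it is already presupposed by the phrase ``the unique sequence'' in Theorem~\ref{thm:minimality}).
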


\begin{definition}[Idempotence] \label{def:am_idempotent}
    We call an action model $(\calU, \alpha)$ \emph{idempotent} if{f} 
   pointed updates  $\bigl(\calM \odot \calU,\,\, (w,\alpha)\bigr)$ and  $\Bigl((\calM \odot \calU) \odot\calU,\,\, \bigr((w,\alpha),\alpha\bigr)\Bigr)$ are defined and isomorphic for any pointed  Kripke model $(\calM, w)$. 
\end{definition}

\begin{theorem} \label{thm:synth_ac_idempotent}
    For any DBI normal formula $\varphi$ and any pointed Kripke model $(\calM, v)$, $\calM \odot \calU, (v, 0)$ is idempotent.
\end{theorem}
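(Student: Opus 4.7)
The plan is to show that applying $\calU_\phi$ twice via pointed updates produces a model isomorphic to a single application by exhibiting an explicit isomorphism
\[
\iota \colon \bigl((u,\alpha),\alpha\bigr) \mapsto (u,\alpha).
\]
The key structural lemma I need to establish first is a \emph{unique successor property}: for every event $\alpha \in E^\phi$ and every agent $i$, there is a unique event $\beta \in E^\phi$ with $\alpha Q_i^\phi \beta$. I would prove this by induction on the construction of $\phi$ from \cref{def:synth_am1}. The base case $\phi = B_i\xi$ is immediate from the explicit listing of $Q^{B_i\xi}_j$. Cases~2 and~3 follow from the inductive hypothesis on $\psi$ together with the observation that the fresh event $m$ inherits successors from event~$0$ of $\calU_\psi$. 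The subtle case is case~4 ($\phi = \psi \land \theta$): here I would need the auxiliary fact that in $\calU_\psi$, event $0$ has a non-sink $j$-successor \emph{iff} $j \in \ta(\psi)$, which is itself a straightforward induction. Then DBI normality gives $\ta(\psi) \cap \ta(\theta) = \varnothing$, so at most one of $\psi,\theta$ contributes a non-sink $j$-edge out of $0$, preserving uniqueness.

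With uniqueness of successors in hand, I would characterize the domain of $(\calM \odot \calU_\phi) \odot \calU_\phi$ by proving
\[
S^{\phi,\phi} = \bigl\{\, \bigl((u,\alpha),\alpha\bigr) \;\bigm|\; (u,\alpha) \in S^\phi \,\bigr\}.
\]
For the $\supseteq$-direction I induct on the length of a path from $(v,0)$ to $(u,\alpha)$ in $\calM \odot \calU_\phi$: the base case $((v,0),0)$ is in $S^{\phi,\phi}$ by definition, and the inductive step uses the fact that if $(u,\alpha) R^\phi_i (u',\alpha')$ then $\alpha Q^\phi_i \alpha'$, so the lifted pair $((u',\alpha'),\alpha')$ is reachable; the precondition check $\calM\odot\calU_\phi, (u',\alpha') \vDash \pre^\phi(\alpha')$ reduces to $\calM, u' \vDash \pre^\phi(\alpha')$ by \cref{lem:prop_upd_stay}, which holds since $(u',\alpha') \in S^\phi$. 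For $\subseteq$, I induct on the path from $((v,0),0)$: whenever $((u,\alpha),\alpha) R^{\phi,\phi}_i ((u',\alpha'),\beta')$, the definition of $R^{\phi,\phi}_i$ forces both $\alpha Q^\phi_i \alpha'$ and $\alpha Q^\phi_i \beta'$, and by the uniqueness lemma we conclude $\alpha' = \beta'$.

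Finally I would check that $\iota$ is an isomorphism of pointed Kripke models. It is clearly a bijection onto $S^\phi$, it sends the distinguished point $((v,0),0)$ to $(v,0)$, and it preserves valuations since these depend only on the base world $u$. For accessibility, $((u,\alpha),\alpha) R^{\phi,\phi}_i ((u',\alpha'),\alpha')$ requires $u R_i u'$, $\alpha Q^\phi_i \alpha'$, and $\alpha Q^\phi_i \alpha'$ (the last condition being redundant), which is exactly the condition for $(u,\alpha) R^\phi_i (u',\alpha')$. The main obstacle in the proof is the structural uniqueness-of-successors lemma together with its compatibility with the $\ta$-disjointness enforced by DBI normal form; once that is in place, the isomorphism itself is essentially bookkeeping.
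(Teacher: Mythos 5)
Your proposal is correct and follows essentially the same route as the paper's proof: both reduce idempotence to showing that every state of the twice-updated model has the diagonal form $\bigl((u,\alpha),\alpha\bigr)$, which rests on the fact that each event of $\calU_\phi$ has a unique $i$-successor for every agent $i$. You spell out details the paper leaves implicit — notably the inductive proof of that unique-successor property (where DBI normality's $\ta$-disjointness handles the conjunction case) and the explicit isomorphism check — but the underlying argument is the same.
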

\begin{proof}
    The updates $\calM \odot \calU, (v, 0)$ and $(\calM \odot \calU) \odot \calU, ((v, 0), 0)$ are always defined by Theorem \ref{thm:update_synth_succ}.

    Since $\calU_\varphi$ has an out-tree structure, we conduct the proof by induction over this structure beginning from the root $0$.
    \\
    \textbf{Induction hypothesis:} For every state $((w, \alpha), \beta)$ in $(\calM \odot \calU) \odot \calU$, $\beta = \alpha$.
    \\
    \textbf{Base case:} Point $(v, 0)$ is the only state that can be combined with $0$, since by Definition \ref{def:synth_am1} $0$ has no incoming edges in $\calU_\varphi$.
    \\
    \textbf{Induction step:} Suppose the IH holds for state $((w, \alpha), \alpha)$, but not for its $i$-accessible state $((x, \beta), \gamma)$, meaning $\gamma \ne \beta$.
    Note that this $i$-accessible state must exist, since  $(\calM \odot \calU) \odot \calU, ((v, 0), 0)$ could not be smaller than $\calM \odot \calU, (v, 0)$, as all preconditions in $\calU_\varphi$ are propositional by Definition \ref{def:synth_am1}.
    Since $((w, \alpha), \alpha) {R_i^\calU}^\calU ((x, \beta), \gamma)$ it must be the case that also $(w, \alpha) R_i^\calU (x, \beta)$ by Definition \ref{def:pointed_update}.
    Since $\varphi$ is DBI normal, every state in $\calU_\varphi$ has no more than one $j$-accessible state, hence $\beta$ is the only $i$-accessible state from~$\alpha$.
    Therefore if $((w, \alpha), \alpha) {R_i^\calU}^\calU ((x, \beta), \gamma)$ it must be that $\gamma = \beta$.
\end{proof}

\section{The Role of Privatization}
\label{sec:privatization}
The minimality of the update synthesis method we have described relies on a particular structure of the action model used for the update.
We call this structure \emph{privatized}.
In this section, we give an explicit definition and show why it serves the purpose of preserving beliefs whenever possible.
We begin with the notion of modal syntactic tree, which represents the nesting of modalities within a formula as a tree and provides the basis of the formal definition of privatization.
Note that the structure of the action model from Def.~\ref{def:synth_am1} (almost) exactly follows this tree structure already, so there was no need to introduce it explicitly.

\begin{definition}[Modal syntactic tree]
\label{def:Synthtree}
The \emph{modal syntactic tree} $\mathcal{T}_{\varphi}$ of a goal formula~$\varphi$ is an out-tree with a single unlabeled root, while all non-root nodes are labeled with modal operators.
\looseness=-1
\end{definition}

Some examples of modal syntactic trees are provided in Fig.~\ref{fig:mod_synt_trees}:
In  $\mathcal{T}_{B_i \xi}$, the root has one child-leaf labeled $B_i$.
Both  $\mathcal{T}_{B_i \varphi}$ and $\mathcal{T}_{B_i (\xi \wedge \varphi)}$ are obtained by labeling the root of $\mathcal{T}_{\varphi}$ with $B_i$ and making it the only child of the new root.
Finally, $\mathcal{T}_{\varphi \wedge \psi}$ is obtained by taking the disjoint union of  $\mathcal{T}_{\varphi}$  and~$\mathcal{T}_{\psi}$ and identifying their roots. 

\begin{figure}[t]
    \centering
    \resizebox{12cm}{!}{%

\begin{subfigure}{0.4\textwidth}
\begin{tikzpicture}
    \begin{scope}[xshift=-6cm,>=stealth, yshift=3cm]
    \node[] (0) {};
    \node[circle,draw, fill=black] (1) [above = 2.3cm  of 0] {};
    \node [left=.5cm of 1] {\textbf{\LARGE{$\mathcal{T}_{B_i \xi}$}}};
    \node[] (2) [below = 1.6cm  of 0] {};
    \node[circle, thin, draw, minimum size=8mm,inner sep=2pt] (2) [below = 1cm  of 1] {$B_i$};
    
            \path[->, -{Stealth[scale width=1.5]}]
        (1) edge node {} (2)
          ;
    \end{scope}
\end{tikzpicture}
\label{1}
\end{subfigure}

\hfill

\begin{subfigure}{0.7\textwidth}
\begin{tikzpicture}
    \begin{scope}[xshift=-6cm,>=stealth, yshift=3cm]
    \node[circle,draw, fill=black] (0) {};
    \node [left=.5cm of 0] {\textbf{\LARGE{$\mathcal{T}_{B_i \varphi} = \mathcal{T}_{B_i (\xi \wedge \varphi)}$}}};
    \node[circle, thin, draw, minimum size=8mm,inner sep=2pt] (2) [below = 1cm  of 0] {$B_i$};
    
            \path[->, -{Stealth[scale width=1.5]}]
        (0) edge node {} (2)
          ;
    \fill [fill=white!80!black]
    (0, -3.5) node {$\mathcal{T}_{ \varphi}$}
     (0,-2.05) node              {}
  -- (-1,-4.5) node             {}
  -- (1,-4.5) node              {}
  ;
  
    \end{scope}
\end{tikzpicture}
\label{5}
\end{subfigure}

\hfill

\begin{subfigure}{0.3\textwidth}
\begin{tikzpicture}
    \begin{scope}[xshift=-6cm,>=stealth, yshift=3cm]
    \node[circle,draw, fill=black] (0) {};
    \node [left=.5cm of 0] {\textbf{\LARGE{$ \mathcal{T}_{\varphi \wedge \psi}$}}};
    \node[] (3) [below = 3.7cm  of 0] {};

    \fill [fill=white!80!black]
    (-0.6, -1.75) node {$\mathcal{T}_{\varphi}$}
     (-0.1,-0.2) node              {}
  -- (-1.5,-2.65) node[behind path] {}
  -- (-0.1,-2.65) node              {}
  ;
  
  \fill [fill=white!80!black]
    (0.6, -1.75) node {$\mathcal{T}_{\psi}$}
     (0.1,-0.2) node              {}
  -- (1.5,-2.65) node[behind path] {}
  -- (0.1,-2.65) node              {}
  ;
    \end{scope}
\end{tikzpicture}
\label{5}
\end{subfigure}
    }
    \caption{Modal syntactic trees\looseness=-1}
    \label{fig:mod_synt_trees}
\end{figure}
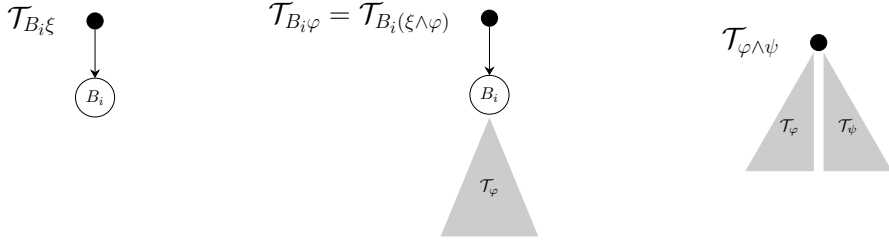

\begin{definition}[Modal-syntactic-tree root paths] \label{def:rootp}
\sloppy
    For formula $\varphi$ we define $\rootp(\varphi)$ as the set of paths $\big((root, \alpha_1, i_1), \allowbreak \ldots, (\alpha_{l-1}, \alpha_l, i_l)\big)$   of length $l\geq0$ in $\mathcal{T}_{\varphi}$  starting from the root, where $i_k$ is the label of $\alpha_k$ for $k=1,\dots,l$.
\end{definition}

\begin{definition}[Kripke-frame root walks] \label{def:rootw_nsr}
    For a pointed Kripke frame $(\langle S, R \rangle, w)$ we similarly define the set of root walks $\rootw(\langle S, R \rangle, w)$ as the set of all walks starting from the root (point) $w$.
    We further define the restriction $\rootwnsr(\langle S, R \rangle, w)$ as the subset of 
    $\rootw(\langle S, R \rangle, w)$, where we exclude walks that contain at least two successive edges for the same agent.
\end{definition}

\begin{definition}[Agent sequence]
\label{def:agseq}
	For a walk $\sigma = \allowbreak\bigl((\alpha_1, \alpha_2, i_1), \allowbreak \ldots, (\alpha_l, \alpha_{l+1}, i_l)\bigr)$ from Defs.~\ref{def:rootp}~or~\ref{def:rootw_nsr} we define
    $\agseq(\mathit{\sigma}) \ce (i_1,  \dots, i_l)$.
    In particular, $\agseq(\varepsilon) \ce \varepsilon$.
    We  extend this definition to sets, where for a set of walks $\Sigma$, $\agseq(\Sigma)$ is the set of corresponding agent sequences.
    \looseness=-1

    The set of all agent sequences of length $l$ we denote by $\agents^l$.
    The set of all agent sequences of length $l$ without any successively repeating agents we denote by $\agents^l_{\nsr}$.
\end{definition}

We now define agent-clusters and agent-accessible parts of the model.
\begin{definition}[Clusters and reachable states] \label{def:cluster}
For a Kripke frame $\calF = \langle W, R \rangle$, world $w\in W$, and sequence $(i_1, \ldots,  i_l)\in\agents^l$ of agents,
	we introduce
	the \emph{cluster of path-accessible worlds}
	\begin{equation}  \label{eq:agent_cluster}
			C_{\calF,w}^{i_1, i_2, \ldots, i_l} \ce 
			\bigl\{u \in W \mid (\exists u_2, 
			 \ldots, u_l \in W)\ wR_{i_1}u_2R_{i_2}   
			 \ldots u_l R_{i_l} u \bigr\}.
	\end{equation}
	In particular, for the empty sequence, $C_{\calF,w}^{\varepsilon}= \{w\}$.
\end{definition}

Finally, we have the necessary tools to formally describe the notion of privatization.
\begin{definition}[Privatized] \label{def:privatized}
    We call a pointed Kripke frame $(\calF, w) = (\langle S, R \rangle, w)$ \emph{privatized} w.r.t. formula $\varphi$ if{f} for any root path $\sigma \in \rootp(\varphi)$
    \begin{equation} \label{eq:ep_privat}
        C^{\agseq(\sigma)}_{\calF, w} \ne \varnothing \text{ and } \left(\forall agSeq \in \bigcup\limits_{l=0}^{\infty} \agents^l_{\nsr} \setminus \{\agseq(\sigma)\}\right)\ \left(C^{\agseq(\sigma)}_{\calF, w} \cap C^{agSeq}_{\calF, w}\right) = \varnothing.
    \end{equation}
\end{definition}

\begin{definition}[Weakly Privatized] \label{def:weakly_privatized}
    Similarly we call a pointed Kripke frame \mbox{$(\calF, w) = (\langle S, R \rangle, w)$} \emph{weakly privatized} w.r.t. formula $\varphi$ if{f} for any root path \mbox{$\sigma \in \rootp(\varphi)$}
    \begin{equation} \label{eq:ep_privat}
        \left(\forall agSeq \in \bigcup\limits_{l=0}^{\infty} \agents^l_{\nsr} \setminus \{\agseq(\sigma)\}\right)\ \left(C^{\agseq(\sigma)}_{\calF, w} \cap C^{agSeq}_{\calF, w}\right) = \varnothing.
    \end{equation}
\end{definition}

\begin{definition}[Walk accessibility] \label{def:wacc}
    For a pointed Kripke frame $(\calF, w) = (\langle W, R \rangle, w)$ we define \emph{walk accessibility operator}
    \begin{equation}
        \wacc(u, \calF, w) \ce \left\{\sigma \in \rootwnsr(\calF, w) \mid \pi_2 \pi_{|\sigma|}\sigma = u\right\}
    \end{equation}
    to be the set of agent-alternating walks starting from $w$ and ending in $u$.
\end{definition}

Walk accessibility yields an alternative equivalent definition of privatized frames:
\begin{corollary}[Privatized alternative] \label{cor:privat_via_wacc}
    A pointed Kripke frame $(\calF, w)$ is privatized w.r.t.~$\varphi$ if{f} for any root path $\sigma \in \rootp(\varphi)$
    \begin{equation} \label{eq:privat_via_wacc}
        C_{\calF, w}^{\agseq(\sigma)} \ne \varnothing \text{ and } \left(\forall s \in C_{\calF, w}^{\agseq(\sigma)}\right)\ \left|\agseq\bigl(\wacc(s, \calF, w)\bigr)\right| = 1.
    \end{equation}
\end{corollary}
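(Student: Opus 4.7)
The plan is to prove the equivalence by simply unfolding the definitions of $\wacc$, cluster, and agent sequence, noting that the key bridging fact is that a walk $\sigma'$ from $w$ to $s$ witnesses membership $s\in C^{\agseq(\sigma')}_{\calF,w}$, and conversely that $s \in C^{\tau}_{\calF,w}$ is equivalent to the existence of a walk $w\to s$ whose agent sequence is $\tau$. The restriction to $\agents^l_\nsr$ in the definition of privatization matches exactly the restriction to agent-alternating walks in the definition of $\wacc$. Both sides of the equivalence assume the non-emptiness clause $C_{\calF,w}^{\agseq(\sigma)}\ne\varnothing$ in the same form, so it carries over directly, and the proof reduces to showing the equivalence of the second clauses.

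For the ($\Rightarrow$) direction, assume $(\calF,w)$ is privatized w.r.t.~$\varphi$ and fix $\sigma\in\rootp(\varphi)$ and $s\in C_{\calF,w}^{\agseq(\sigma)}$. Since $\sigma$ is a path in $\mathcal{T}_\varphi$, I would first observe that $\agseq(\sigma)\in \bigcup_l \agents^l_{\nsr}$ (consecutive nodes of a root path in $\mathcal{T}_\varphi$ originate from nested modal operators of a DBI normal formula and so carry distinct labels). Hence the walk witnessing $s\in C_{\calF,w}^{\agseq(\sigma)}$ lies in $\wacc(s,\calF,w)$, giving $\agseq(\sigma)\in \agseq(\wacc(s,\calF,w))$. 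For any other $\tau\in\agseq(\wacc(s,\calF,w))$, its witnessing walk shows $s\in C_{\calF,w}^{\tau}$, and since $\tau\in \agents^*_{\nsr}$ by definition of $\wacc$, the privatization disjointness forces $\tau=\agseq(\sigma)$. Thus $\agseq(\wacc(s,\calF,w))=\{\agseq(\sigma)\}$, which has cardinality $1$.

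For the ($\Leftarrow$) direction, I proceed by contrapositive: suppose some $agSeq\in \bigcup_l\agents^l_\nsr\setminus\{\agseq(\sigma)\}$ satisfies $C_{\calF,w}^{\agseq(\sigma)}\cap C_{\calF,w}^{agSeq}\ne\varnothing$, and pick $s$ in this intersection. Then there are two walks from $w$ to $s$, one with agent sequence $\agseq(\sigma)$ and one with agent sequence $agSeq$; both lie in $\wacc(s,\calF,w)$ because both sequences avoid successive repeats. Hence $\agseq(\wacc(s,\calF,w))\supseteq\{\agseq(\sigma),agSeq\}$ has cardinality at least $2$, contradicting the hypothesis. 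Note that $s\in C_{\calF,w}^{\agseq(\sigma)}$ is required in order to invoke the hypothesis, which is exactly the role of the non-emptiness clause together with the chosen witness.

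I do not expect any real obstacle: both directions are a one-line check after the definitions are aligned, and the only mildly delicate point is confirming that $\agseq(\sigma)\in\agents^*_\nsr$ for $\sigma\in\rootp(\varphi)$, so that the walk realizing $s\in C_{\calF,w}^{\agseq(\sigma)}$ actually belongs to $\wacc(s,\calF,w)$ rather than being excluded by the alternation restriction. I would state this as a short observation at the start of the proof and then proceed with the two implications above.
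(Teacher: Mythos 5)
Your proof is correct and takes essentially the same route as the paper, whose own proof simply states that the corollary follows by unfolding Definitions~\ref{def:agseq}, \ref{def:privatized}, and~\ref{def:wacc}. Your explicit two-direction argument, including the preliminary observation that $\agseq(\sigma)\in\bigcup_{l}\agents^{l}_{\nsr}$ for root paths $\sigma$ of a DBI normal formula (needed so that the walk witnessing $s\in C^{\agseq(\sigma)}_{\calF,w}$ is not excluded by the alternation restriction in $\wacc$), fills in exactly the details the paper leaves implicit.
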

\begin{proof}
    Follows from Defs. \ref{def:agseq}, \ref{def:privatized} and \ref{def:wacc}.
\end{proof}

\begin{corollary}[Weakly privatized alternative] \label{cor:weakly_privat_via_wacc}
    A pointed Kripke frame $(\calF, w)$ is privatized w.r.t.~$\varphi$ if{f} for any root path $\sigma \in \rootp(\varphi)$
    \begin{equation} \label{eq:privat_via_wacc}
        \left(\forall s \in C_{\calF, w}^{\agseq(\sigma)}\right)\ \left|\agseq\bigl(\wacc(s, \calF, w)\bigr)\right| = 1.
    \end{equation}
\end{corollary}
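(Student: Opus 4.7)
The plan is to mirror the proof of the preceding Corollary~\ref{cor:privat_via_wacc} exactly, just dropping the non-emptiness clause, because \emph{weakly privatized} differs from \emph{privatized} in Def.~\ref{def:weakly_privatized} only by not requiring $C^{\agseq(\sigma)}_{\calF,w} \ne \varnothing$. So the proof should be a direct unpacking of Defs.~\ref{def:agseq}, \ref{def:weakly_privatized}, and~\ref{def:wacc}, showing that the cluster-disjointness condition is equivalent to the walk-set-cardinality condition.

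For the forward direction, assume $(\calF, w)$ is weakly privatized w.r.t.~$\varphi$, fix $\sigma \in \rootp(\varphi)$ and $s \in C^{\agseq(\sigma)}_{\calF,w}$. By Def.~\ref{def:cluster}, some walk from $w$ to $s$ realizes $\agseq(\sigma)$, and (since root paths in $\mathcal{T}_\varphi$ avoid successive same-agent edges for the DBI normal formulas under consideration) this walk lies in $\rootwnsr(\calF, w)$, so $\agseq(\sigma) \in \agseq\bigl(\wacc(s, \calF, w)\bigr)$. Any additional $agSeq' \in \agseq\bigl(\wacc(s, \calF, w)\bigr)$ with $agSeq' \ne \agseq(\sigma)$ would, by Def.~\ref{def:wacc}, lie in $\bigcup_l \agents^l_{\nsr}$ and witness $s \in C^{\agseq(\sigma)}_{\calF,w} \cap C^{agSeq'}_{\calF,w}$, contradicting Def.~\ref{def:weakly_privatized}. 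Hence the cardinality equals one.

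For the reverse direction, argue by contrapositive. If some $agSeq \in \bigcup_l \agents^l_{\nsr} \setminus \{\agseq(\sigma)\}$ has $s \in C^{\agseq(\sigma)}_{\calF,w} \cap C^{agSeq}_{\calF,w}$, then the two witnessing walks lie in $\wacc(s, \calF, w)$ and contribute two distinct agent sequences, making $\bigl|\agseq\bigl(\wacc(s, \calF, w)\bigr)\bigr| \geq 2$, contradicting the hypothesis.

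The main (minor) obstacle is the bookkeeping point that the walks whose existence is guaranteed by Def.~\ref{def:cluster} actually lie in $\rootwnsr$, i.e., have no two successive edges labeled by the same agent, so that they enter $\wacc(s, \calF, w)$. Since the section is developed for DBI normal $\varphi$ (where the normalization clauses in Def.~\ref{def:goal_formula} preclude $B_iB_i$ nesting), every $\agseq(\sigma)$ for $\sigma \in \rootp(\varphi)$ automatically belongs to $\bigcup_l \agents^l_{\nsr}$, and the walk realizing it is in $\rootwnsr$ by construction. Outside that setting one would either have to restrict the corollary to such $\varphi$ or argue vacuousness of both sides, but either way the equivalence remains trivial.
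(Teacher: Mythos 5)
Your proof is correct and follows the same route as the paper, which simply states that the corollary ``follows from Defs.~\ref{def:agseq}, \ref{def:weakly_privatized} and~\ref{def:wacc}''; you have merely unpacked that one-line argument, including the (correct) observation that for the DBI normal formulas under consideration every $\agseq(\sigma)$ is repetition-free so the witnessing walks indeed land in $\rootwnsr$.
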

\begin{proof}
    Follows from Defs. \ref{def:agseq}, \ref{def:weakly_privatized} and \ref{def:wacc}.
\end{proof}

\begin{theorem} \label{thm:synth_ac_privatized}
    For DBI formula $\varphi$, the pointed action model $(\calU_\varphi, 0)$ is privatized w.r.t.~$\varphi$.
\end{theorem}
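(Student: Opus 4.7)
The plan is to prove privatization by induction on the construction of the DBI normal formula $\varphi$ following the four clauses of Def.~\ref{def:synth_am1}, strengthening the inductive hypothesis so that it pins down the precise location of each cluster in $\calU_\varphi$. Specifically, the strengthened IH will assert: (a) the correspondence sending each root path $\sigma \in \rootp(\varphi)$ to a unique ``end-event'' $e(\sigma) \in E^\varphi$ is well-defined and injective, with $e(\varepsilon) = 0$, with $e(\sigma) \in D^\varphi$ for non-empty $\sigma$, and with $C^{\agseq(\sigma)}_{\calU_\varphi, 0} = \{e(\sigma)\}$; (b) for every nsr agent sequence not in $\agseq(\rootp(\varphi))$, the corresponding cluster is a subset of $\{-1\}$; and (c) no edge in $\calU_\varphi$ leads into the root event $0$. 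The disjointness required by Def.~\ref{def:privatized} then follows immediately, because $-1$ never coincides with $0$ or with a fresh event of $D^\varphi$, and because injectivity of $e$ separates different root paths into distinct singleton clusters.

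The base case $\varphi = B_i \xi$ can be read off directly from the explicit three-event construction in clause~1. For the inductive step $\varphi = B_i \psi$ (clause~2, and analogously clause~3 since only preconditions differ and privatization concerns only the frame), the key observation is that the newly added event $m$ inherits the outgoing edge structure of the old root $0$ in $\calU_\psi$: for any nsr agent sequence $s$ whose first agent differs from $i$, the cluster reached from $m$ in $\calU_\varphi$ via $s$ coincides with the cluster reached from $0$ in $\calU_\psi$ via $s$. The DBI normal-form condition $i \notin \ta(\psi)$ guarantees that every non-empty $s \in \agseq(\rootp(\psi))$ starts with an agent distinct from $i$, so the concatenation $(i) \cdot s$ remains nsr and the IH for $\psi$ transfers cleanly, yielding the singletons $\{e_\psi(s)\}$ plus the fresh event $m$ itself for the one-step root path whose agent sequence is $(i)$. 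For the conjunction case $\varphi = \psi \wedge \theta$ (clause~4), the disjointness $D^\psi \cap D^\theta = \varnothing$ combined with the normal-form constraint $\ta(\psi) \cap \ta(\theta) = \varnothing$ guarantees that every walk from $0$ commits to the $\psi$-side or the $\theta$-side on its first edge and stays there, so the IH applies to each side separately and the fresh event names keep their clusters disjoint.

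The main technical obstacle is the bookkeeping behind the auxiliary invariant~(c) that walks cannot return to $0$ once they leave it, together with the companion fact that walks trapped at $-1$ stay there. Both must be carried through the induction: in each inductive clause the newly added edges in $Q^\varphi_j$ never target $0$, since the restricted relation $\underQ^\psi_j$ explicitly excludes all edges touching $0$, while the freshly added pairs $(0,-1)$, $(0,m)$, $(m,m)$, and $\{(m,k) \mid (0,k) \in Q^\psi_j\}$ either originate at $0$ or reuse IH-guaranteed non-$0$ destinations. Once this invariant is in place, walks in $\calU_\varphi$ are confined to an out-tree-like skeleton mirroring $\mathcal{T}_\varphi$ with every dead end routed to the sink~$-1$, from which both the injectivity of $\sigma \mapsto e(\sigma)$ and the collapse of non-root-path clusters onto $\{-1\}$ drop out, completing the verification of Def.~\ref{def:privatized}.
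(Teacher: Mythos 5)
Your proposal is correct and follows essentially the same route as the paper: a structural induction over the four clauses of \cref{def:synth_am1}, with the base case read off from the explicit three-event model and the inductive steps arguing that root paths of $\mathcal{T}_\varphi$ and nsr root walks of $\calU_\varphi$ extend in lockstep (using $i \notin \ta(\psi)$ and $\ta(\psi)\cap\ta(\theta)=\varnothing$ to keep the concatenated sequences nsr and the two conjuncts separated). The only difference is presentational: your strengthened induction hypothesis (singleton clusters $C^{\agseq(\sigma)}_{\calU_\varphi,0}=\{e(\sigma)\}$, collapse of all other nsr clusters onto $\{-1\}$, and no edges into the root) spells out the invariants that the paper's terser argument leaves implicit.
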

\begin{proof}
    by induction on the recursive construction of $\calU_\varphi$ as by Definition \ref{def:synth_am1}.
    \\
    \textbf{Base case:} For formula $B_i \xi$, where $\xi$ is propositional,
    since \mbox{$\agseq(\rootp(B_i \xi)) = \{\varepsilon, (i)\}$}, we only need to check all states walk-accessible via these two sequences. 
    Since in $\calU_{B_i \xi}$ the state $m \notin \{0, -1\}$ is the only state accessible via an $i$-edge from the root $0$, as the $0$ itself has no incoming arrows, $\agseq(\wacc(m, \calU_{B_i \xi}, 0)) = \{i\}$.
    Furthermore the root $0$ is only walk-accessible via the empty sequence $\varepsilon$, therefore $\agseq(\wacc(0, \calU_{B_i \xi}, 0)) = \{\varepsilon\}$.
    Lastly as the sink~$-1$ is not walk-accessible via any of the two sequences $\varepsilon$ and $(i)$ in $\agseq(\rootp(B_i \xi))$ we conclude that $(\calU_{B_i \xi}, 0)$ is indeed privatized w.r.t. $B_i \xi$.
    \\
    \textbf{Induction step:}
    If $\varphi = B_i \psi$ for DBI normal formula $\psi$, then $(\calU_\psi, 0)$ is privatized w.r.t. $\psi$.
\sloppy
    $\rootp(\varphi)$ extends the paths in $\rootp(\psi)$ right after the root with an edge to $m$ as so: 
    \begin{multline*}
    \rootp(\varphi) =  \{\varepsilon\}\cup\{((root, m, i), (m, \alpha_1, j_1)... (\alpha_{l-1}, \alpha_l, j_l)) \mid \\ ((root, \alpha_1, j_1), ... (\alpha_{l-1}, \alpha_l, j_l)) \in \rootp(\psi) \text{ and } l \ge 0\} .
    \end{multline*}
    Since the set of walks in $\rootwnsr(\calU_\psi, 0)$ is extended in exactly the same way, $(\calU_\varphi, 0)$~must also be privatized w.r.t. $\varphi$.

    If $\varphi = B_i(\xi \wedge \psi)$ for propositional formula $\xi$ and DBI normal formula $\psi$, the reasoning is analogous to the previous case, as privatization is a frame property and thus independent of the additional precondition added.

    If $\varphi = \psi \wedge \theta$, for DBI normal formulas $\psi$ and $\theta$, then 
    \[
    \rootp(\varphi) = \rootp(\psi) \cup \rootp(\theta).
    \]
    Again since by Definition \ref{def:synth_am1} $\rootw(\calU_\varphi) = \rootw(\calU_\psi) \cup \rootw(\calU_\theta)$ is extended the same way, $(\calU_\varphi, 0)$ is again privatized w.r.t. $\varphi$ and we are done.
    Note that in $\calU_\varphi$ not only the root $0$, but also the sink nodes $-1$ of $\calU_\psi$ and $\calU_\theta$ are unified.
\end{proof}

We will sometimes lightly abuse the notation by applying properties defined on Kripke frames to Kripke models $\langle S, R, V\rangle$ and action models  $\langle E, Q, \pre\rangle$. In all such cases, we mean the property to be applied to frames $\langle S, R\rangle$ and  $\langle E, Q\rangle$ respectively.

\begin{lemma} \label{lem:rootp_up_equiv_act}
    Let the pointed update $\bigl(\calM\odot\calU,(w,\alpha)\bigr)$ of a pointed Kripke model $(\calM,w)$ with a pointed action model $(\calU,\alpha)$ be defined. For any $agSeq \in \bigcup\limits_{l=0}^{\infty} \agents^l$,
    \begin{equation}
        (x, \beta) \in C^{agSeq}_{\calM \odot \calU, (w, \alpha)} \qquad \Longrightarrow \qquad \beta \in C^{agSeq}_{U, \alpha} \text{ and } x \in C^{agSeq}_{\calM, w}.
    \end{equation}
\end{lemma}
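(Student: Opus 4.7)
The plan is to prove the lemma by straightforward induction on the length $l$ of the agent sequence $agSeq = (i_1,\ldots,i_l)$, unfolding Definition~\ref{def:cluster} of clusters and Definition~\ref{def:pointed_update} of the pointed update. The key observation is that the accessibility relation $R^\calU_i$ in $\calM\odot\calU$ is, by construction, the intersection (lifted to product pairs) of $R_i$ and $Q_i$, so every single step of reachability in the product projects componentwise to the two factors. Since clusters are defined purely in terms of chains of such accessibility steps from the distinguished point, the projection property extends by iteration to arbitrary finite sequences.

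More concretely, for the base case $l=0$, we have $agSeq = \varepsilon$ and $C^{\varepsilon}_{\calM\odot\calU,(w,\alpha)} = \{(w,\alpha)\}$ by Definition~\ref{def:cluster}, so if $(x,\beta)$ lies in this cluster, then $(x,\beta)=(w,\alpha)$, giving $x = w \in C^{\varepsilon}_{\calM,w}$ and $\beta = \alpha \in C^{\varepsilon}_{\calU,\alpha}$. For the inductive step, assume the claim for sequences of length~$l$ and let $agSeq = (i_1,\ldots,i_l,i_{l+1})$. If $(x,\beta) \in C^{agSeq}_{\calM\odot\calU,(w,\alpha)}$, then by Definition~\ref{def:cluster} there exists a witnessing chain whose penultimate element $(x',\beta')$ lies in $C^{(i_1,\ldots,i_l)}_{\calM\odot\calU,(w,\alpha)}$ and satisfies $(x',\beta')\, R^\calU_{i_{l+1}}\, (x,\beta)$. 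By the induction hypothesis, $x' \in C^{(i_1,\ldots,i_l)}_{\calM,w}$ and $\beta' \in C^{(i_1,\ldots,i_l)}_{\calU,\alpha}$. By Definition~\ref{def:pointed_update}, the single step $(x',\beta')\, R^\calU_{i_{l+1}}\, (x,\beta)$ forces $x'\, R_{i_{l+1}}\, x$ and $\beta'\, Q_{i_{l+1}}\, \beta$, so the chains witnessing the IH can be extended by one edge to witness $x \in C^{agSeq}_{\calM,w}$ and $\beta \in C^{agSeq}_{\calU,\alpha}$.

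There is essentially no hard step here: the lemma is a projection property that falls out of the definitions once we set up the induction carefully, and the only thing to be careful about is to invoke the correct direction of the ``$\Longrightarrow$'' in the cluster definition (existence of a witnessing chain) and the fact that membership of $(x,\beta)$ in the updated domain $S^\calU$ is guaranteed because $(x,\beta)$ already lies in $C^{agSeq}_{\calM\odot\calU,(w,\alpha)}$, so we never need to worry about preconditions failing along the chain. Note also that the lemma is deliberately one-directional: the converse would require every world-event pair in the product of reachable components to satisfy the relevant precondition, which is precisely what pointed updates may trim away.
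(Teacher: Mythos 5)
Your proof is correct and takes essentially the same route as the paper's: induction on the length of the agent sequence, using the fact that each $R^\calU_i$-step in $\calM\odot\calU$ projects componentwise to an $R_i$-step and a $Q_i$-step, so the witnessing chain for the cluster membership extends one edge at a time in both factors.
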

\begin{proof} Let  $\calM = \langle S, R, V \rangle$, $\calU = \langle E, Q, pre \rangle$, and $\calM\odot\calU = \langle S', R', V'\rangle$. We use induction on $l = |agSeq|$. \\
    \textbf{Base case $l = 0$:}
    Since $\agents^0 = \{\epsilon\}$ the statement follows trivially from Defs.~\ref{def:cluster}~and~\ref{def:pointed_update}.
\\
    \textbf{Induction step:}
    Consider any agent sequence $\overline{agSeq} = agSeq \circ i$ of length $l+1$ and state $(x, \beta) \in C^{\overline{agSeq}}_{\calM \odot \calU, (w, \alpha)}$.
    By Def.~\ref{def:cluster}, there must exist a state $(v, \gamma) \in C^{agSeq}_{\calM \odot \calU, (w, \alpha)}$ such that $(v, \gamma) R'_i (x, \beta)$, in particular, $v R_i x$ and $\gamma Q_i \beta$ by Def.~\ref{def:pointed_update}. By IH for $agSeq$ of length~$l$, we have $v \in C^{agSeq}_{\calM, w}$ and  $\gamma \in C^{agSeq}_{\calU, \alpha}$. Thus, by Def.~\ref{def:cluster} both $x \in C^{\overline{agSeq}}_{\calM, w}$ and $\beta \in C^{\overline{agSeq}}_{\calU,\alpha}$.
\end{proof}

\begin{theorem} \label{thm:update_with_privatized_ac_leads_to_weakly_privatized_em}
    For pointed action model $(\calU, \alpha)$ privatized w.r.t. DBI formula $\varphi$ and pointed epistemic Kripke model $(\calM, w)$ if their pointed update exists, then $(\calM \odot \calU, (w, \alpha))$ is weakly privatized w.r.t. $\varphi$.
\end{theorem}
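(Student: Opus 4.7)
The plan is to chase disjointness through the projection map $(x,\beta)\mapsto \beta$ from the updated model to the action model, using Lemma~\ref{lem:rootp_up_equiv_act} as the bridge. Since weak privatization drops the non-emptiness requirement and only demands the pairwise disjointness of clusters indexed by distinct alternating agent sequences, we only need to verify the intersection clause.

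Concretely, fix an arbitrary root path $\sigma \in \rootp(\varphi)$ and an arbitrary alternating sequence $agSeq \in \bigcup_{l \geq 0} \agents^l_{\nsr} \setminus \{\agseq(\sigma)\}$. I would argue by contradiction: suppose some pair $(x,\beta)$ belongs to both $C^{\agseq(\sigma)}_{\calM \odot \calU,(w,\alpha)}$ and $C^{agSeq}_{\calM \odot \calU,(w,\alpha)}$. Applying Lemma~\ref{lem:rootp_up_equiv_act} twice (once per cluster membership), we get that $\beta \in C^{\agseq(\sigma)}_{\calU,\alpha}$ and simultaneously $\beta \in C^{agSeq}_{\calU,\alpha}$. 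But by hypothesis $(\calU,\alpha)$ is privatized w.r.t.~$\varphi$, so by Def.~\ref{def:privatized} these two clusters in the action model are disjoint whenever $agSeq \neq \agseq(\sigma)$ and $agSeq$ is alternating. This contradiction shows $C^{\agseq(\sigma)}_{\calM \odot \calU,(w,\alpha)} \cap C^{agSeq}_{\calM \odot \calU,(w,\alpha)} = \varnothing$, which is exactly the clause required by Def.~\ref{def:weakly_privatized}.

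The only subtlety worth remarking on is why we cannot upgrade this conclusion to full privatization: Lemma~\ref{lem:rootp_up_equiv_act} is a one-way projection (update-cluster membership implies action-model-cluster membership), but the reverse need not hold. In particular, even though $C^{\agseq(\sigma)}_{\calU,\alpha} \neq \varnothing$ by privatization of $\calU$, the corresponding cluster $C^{\agseq(\sigma)}_{\calM \odot \calU,(w,\alpha)}$ can be empty because $\calM$ itself may lack a matching $R$-walk from $w$ (or a precondition may fail along the way), so non-emptiness of clusters cannot be guaranteed and we genuinely get the weaker notion. The main obstacle, if any, is book-keeping: making sure that the one-way direction of Lemma~\ref{lem:rootp_up_equiv_act} suffices here, which it does because the statement to be proved only needs an implication in one direction (cluster membership in the update forces cluster membership in the action model). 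No induction on the structure of $\varphi$ or on the length of $agSeq$ is needed beyond what the cited lemma already provides.
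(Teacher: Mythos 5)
Your proposal is correct and follows essentially the same route as the paper's own proof: assume a common element of two distinct clusters in the updated model, project via Lemma~\ref{lem:rootp_up_equiv_act} to get a common element of the corresponding clusters in the action model, and contradict the privatization of $(\calU,\alpha)$. Your added remark explaining why non-emptiness (and hence full privatization) cannot be carried over is a correct and useful observation, though not part of the paper's argument.
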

\begin{proof}
    Suppose by contradiction that the pointed action model $(\calU, \alpha)$ is privatized w.r.t. $\varphi$, but for some pointed Kripke model $(\calM, v)$, the update $(\calM \odot \calU, (w, \alpha))$ exists, however is not weakly privatized w.r.t. $\varphi$.
    This means that there exists a root path $\sigma \in \rootp(\varphi)$,
    agent sequence $agseq \in \bigcup\limits^{\infty}_{l=0} \agents^l_{nsr} \setminus \agseq(\sigma)$ and
    state \mbox{$(x, \beta) \in C^{agseq}_{\calM \odot \calU, (w, \alpha)}$ and $(x, \beta) \in C^{\agseq(\sigma)}_{\calM \odot \calU, (w, \alpha)}$}.
    By Lemma \ref{lem:rootp_up_equiv_act} this implies that \mbox{$\beta \in C^{\agseq(\sigma)}_{\calU, \alpha}$} and $\beta \in C^{agSeq}_{\calU, \alpha}$.
    However this contradicts the assumption that $(\calU, \alpha)$~is privatizing w.r.t. $\varphi$, as $C^{\agseq(\sigma)}_{\calU, \alpha} \cap C^{agSeq}_{\calU, \alpha} = \varnothing$ 
\end{proof}

\begin{theorem} \label{thm:synth_ac_privatizing}
    For DBI formula $\varphi$ and any pointed Kripke model $(\calM, v)$, the pointed update between $(\calU_\varphi, 0)$ and $(\calM, v)$ is defined and $(\calM \odot \calU, (w, \alpha))$ is weakly privatized.
\end{theorem}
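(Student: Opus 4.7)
The plan is to show that this theorem follows essentially as an immediate corollary of three earlier results, so the main work is just to chain them together and reconcile the slight notational mismatch in the statement (where $(\calM \odot \calU, (w,\alpha))$ should read $(\calM \odot \calU_\varphi, (v,0))$).

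First, I would invoke Theorem \ref{thm:update_synth_succ} (Update synthesis success) to establish that the pointed update $(\calM \odot \calU_\varphi, (v,0))$ is indeed defined for any pointed Kripke model $(\calM,v)$. That theorem does exactly this as a side effect of verifying that $\pre^\varphi(0)=\top$, so the hypothesis on definedness is met for free.

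Next, I would apply Theorem \ref{thm:synth_ac_privatized}, which tells us that the synthesized pointed action model $(\calU_\varphi, 0)$ is privatized w.r.t.~$\varphi$. This is the structural property of the action model produced by the construction in Definition \ref{def:synth_am1}, and it is already proved above by induction on the recursive construction.

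Finally, having (i) a privatized pointed action model $(\calU_\varphi,0)$ w.r.t.~$\varphi$ and (ii) a well-defined pointed update $(\calM \odot \calU_\varphi, (v,0))$, I would invoke Theorem \ref{thm:update_with_privatized_ac_leads_to_weakly_privatized_em} to conclude that $(\calM \odot \calU_\varphi, (v,0))$ is weakly privatized w.r.t.~$\varphi$, which is exactly the claim. Since all the technical content (Lemma \ref{lem:rootp_up_equiv_act} and the contradiction argument in Theorem \ref{thm:update_with_privatized_ac_leads_to_weakly_privatized_em}) has already been carried out, there is no real obstacle here — the proof is a three-line composition. The only subtle point worth flagging is that the result is only \emph{weakly} privatized rather than fully privatized: non-emptiness of clusters $C^{\agseq(\sigma)}_{\calM \odot \calU_\varphi, (v,0)}$ need not hold for every root path $\sigma$ of $\varphi$ because a relevant precondition $\pre^\varphi(\alpha_j)$ may be false throughout $(\calM,v)$, so one cannot strengthen the conclusion to privatization in general.
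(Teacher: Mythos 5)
Your proposal is correct and follows essentially the same route as the paper: the paper's own proof is a one-liner citing Theorem~\ref{thm:update_with_privatized_ac_leads_to_weakly_privatized_em}, implicitly relying on Theorem~\ref{thm:synth_ac_privatized} for the privatization of $(\calU_\varphi,0)$ and on Theorem~\ref{thm:update_synth_succ} for definedness, exactly the two ingredients you make explicit. Your closing remark on why the conclusion cannot be strengthened from weak privatization to full privatization is a sensible addition but not part of the paper's argument.
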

\begin{proof}
    Follows immediately from Theorem \ref{thm:update_with_privatized_ac_leads_to_weakly_privatized_em}.
\end{proof}
\section{Conclusions}
\label{subsec:concl}

We presented a way to synthesize a pointed action model for a limited range of goal formulas that:
(i) makes the goal formula true in the resulting model;
(ii) privatizes any Kripke model to which it is applied to, thus breaking the common knowledge of the model assumption and simulating totally private communication;
(iii) preserves consistency whenever possible, marking a difference from, e.g.,~\cite{AUL-synth}; and
(iv) has minimal side effects, in the sense that it changes as few beliefs unspecified in the goal formula as possible.
In addition, the synthesized pointed action models are combined with pointed Kripke models via the new pointed update operation, which does not apply to the whole model globally, but rather it respects the structure of the privatization introduced in the synthesis.
Since the pointed update is a subset of the full product update operation, the proposed update mechanism is quite efficient: only the initial  update increases the size of the model at worst linearly in the size of the goal formula, while all subsequent updates via goal formulas with the same  belief structure (or substructure thereof) only decrease the model size. Compare this with the exponential blow up after iterated updates using the standard product updates.

\emph{Future work.} 
We aim at extending the update mechanism so as to allow a wider range of goal formulas, introducing negations of modal operators, which might introduce ignorance, and disjunctions of modal operators, which have multiple possible realizations, while preserving the properties of privatization, minimality and consistency.
Furthermore, we plan to introduce group updates such as public announcements, providing full granularity in the update design.

\emph{Acknowledgments.}  We are grateful to Hans van Ditmarsch, Stephan Felber, Kristina Fruzsa, Rojo Randrianomentsoa, Hugo Rinc\'on Galeana, and Ulrich Schmid for multiple illuminating and inspiring discussions.

\bibliographystyle{abbrvurl}
\bibliography{references1}

\end{document}